\newtheorem{definition}{Definition}
\newtheorem{theorem}{Theorem}
\newtheorem*{theorem*}{Theorem}
\newtheorem{corollary}{Corollary}
\newtheorem{lemma}{Lemma}
\DeclareMathOperator*{\HE}{H}
\DeclareMathOperator*{\E}{E}
\DeclareMathOperator*{\dist}{dist}
\DeclareMathOperator*{\poly}{poly}
\DeclareMathOperator*{\simi}{sim}
\newcommand{\zo}{\{0,1\}}
\title{
   {Optimal Las Vegas Locality Sensitive Data Structures}\\
   {\large Full Version}
}
\author{
   Thomas Dybdahl Ahle\\
   {IT University of Copenhagen}
}
\date{June 27 2018}
\begin{document}
\maketitle

\begin{abstract}
   We show that approximate similarity (near neighbour) search can be solved in high dimensions with performance matching state of the art (data independent) Locality Sensitive Hashing, but with a guarantee of no false negatives.
   Specifically we give two data structures for common problems.

   For $c$-approximate near neighbour in Hamming space we get query time $dn^{1/c+o(1)}$ and space $dn^{1+1/c+o(1)}$ matching that of \cite{indyk1998approximate} and answering a long standing open question from~\cite{indyk2000dimensionality} and~\cite{pagh2016locality} in the affirmative.
   By means of a new deterministic reduction from $\ell_1$ to Hamming we also solve $\ell_1$ and $\ell_2$ with query time $d^2n^{1/c+o(1)}$ and space $d^2 n^{1+1/c+o(1)}$.

   For $(s_1,s_2)$-approximate Jaccard similarity we get query time $dn^{\rho+o(1)}$ and space $dn^{1+\rho+o(1)}$, $\rho=\log\frac{1+s_1}{2s_1}\big/\log\frac{1+s_2}{2s_2}$, when sets have equal size, matching the performance of~\cite{tobias2016}.

   The algorithms are based on space partitions, as with classic LSH, but we construct these using a combination of brute force, tensoring, perfect hashing and splitter functions à la~\cite{naor1995splitters}.
   We also show a new dimensionality reduction lemma with 1-sided error.
\end{abstract}

\section{Introduction}
Locality Sensitive Hashing has been a leading approach to high dimensional similarity search (nearest neighbour search) data structures for the last twenty years.
Intense research
\cite{indyk1998approximate, gionis1999similarity, kushilevitz2000efficient, indyk2000high, indyk2001approximate, charikar2002similarity, datar2004locality, lv2007multi, panigrahy2006entropy, andoni2006near, andoni2014beyond, andoni2016optimal, becker2016new, ahle2017parameter, aumuller2017distance} has applied the concept of space partitioning to many different problems and similarity spaces.
These data structures are popular in particular because of their ability to overcome the `curse of dimensionality' and conditional lower bounds by~\cite{williams2005new}, and give sub-linear query time on worst case instances.
They achieve this by being approximate and Monte Carlo, meaning they may return a point that is slightly further away than the nearest, and with a small probability they may completely fail to return any nearby point.

\begin{definition}[$(c,r)$-Approximate Near Neighbour]
   Given a set $P$ of $n$ data points in a metric space $(X, \dist)$,
   build a data structure, such that given any $q\in X$, for which there is an $x\in P$ with $\dist(q,x)\le r$, we return a $x'\in P$ with $\dist(q,x')\le cr$.
\end{definition}

A classic problem in high dimensional geometry has been whether data structures existed for $(c,r)$-Approximate Near Neighbour with Las Vegas guarantees, and performance matching that of Locality Sensitive Hashing.
That is, whether we could guarantee that a query will always return an approximate near neighbour, if a near neighbour exists; or simply, if we could rule out false negatives?
The problem has seen practical importance as well as theoretical.
There is in general no way of verifying that an LSH algorithm is correct when it says `no near neighbours' - other than iterating over every point in the set, in which case the data structure is entirely pointless.
This means LSH algorithms can't be used for many critical applications, such as finger print data bases.
Even more applied, it has been observed that tuning the error probability parameter is hard to do well, when implementing LSH~\cite{gionis1999similarity, arya1998optimal}.
A Las Vegas data structure entirely removes this problem.
Different authors have described the problem with different names, such as `Las Vegas'~\cite{indyk2000dimensionality}, `Have no false negatives'~\cite{goswami2017distance, pagh2016locality}, `Have total recall'~\cite{pham2016scalability}, `Are exact'~\cite{arasu2006efficient} and `Are explicit'~\cite{karppa2016explicit}.

Recent years have shown serious progress towards finally solving the problem.
In particular~\cite{pagh2016locality} showed that the problem in Hamming space admits a Las Vegas algorithm with query time $dn^{1.38/c+o(1)}$, matching the $dn^{1/c}$ data structure of~\cite{indyk1998approximate} up to a constant factor in the exponent.
In this paper we give an algorithm in the Locality Sensitive Filter framework~\cite{becker2016new, christiani2016framework}, which not only removes the factor $1.38$, but improves to $dn^{1/(2c-1)+o(1)}$ in the case $cr\approx d/2$, matching the algorithms of~\cite{andoni2015practical} for Hamming space.

We would like to find an approach to Las Vegas LSH that generalizes to the many different situations where LSH is useful.
Towards that goal, we present as second algorithm for the approximate similarity search problem under Braun-Blanquet similarity, which is defined for sets $x,y\subseteq[d]$ as $\simi(x,y)=|x\cap y|/\max(|x|,|y|)$.
We refer to the following problem definition:

\begin{definition}[Approximate similarity search]
   Let $P \subseteq \mathcal P([d])$ be a set of $|P| = n$ subsets of $[d]$; (here $\mathcal P(X)$ denotes the powerset of $X$.)
   let $\simi : \mathcal P([d]) \times \mathcal P([d]) \to [0,1]$ be a similarity measure.
   For given $s_1, s_2 \in [0,1]$, $s_1 > s_2$,
   a solution to the ``$(s_1, s_2)$-similarity search problem under $\simi$'' is a data structure that supports the following query operation:
   on input $q \subseteq [d]$, for which there exists a set $x \in P$ with $\simi(x,q) \ge s_1$, return $x' \in P$ with $\simi(x', q) > s_2$.
\end{definition}

The problem has traditionally been solved using the Min-Hash LSH~\cite{broder1997syntactic, broder1997resemblance}, which combined with the results of Indyk and Motwani~\cite{indyk1998approximate} gives a data structure with query time $dn^\rho$ and space $dn^{1+\rho}$ for $\rho=\log s_1/\log s_2$.
Recently it was shown by~\cite{tobias2016} that this could be improved for vectors of equal weight to $\rho=\log\frac{2 s_1}{1+s_1}\big/\log\frac{2 s_2}{1+s_2}$.
We show that it is possible to achieve this recent result with a data structure that has no false negatives.

\subsection{Summary of Contributions}

We present the first Las Vegas algorithm for approximate near neighbour search, which gives sub-linear query time for any approximation factor $c > 1$.
This solves a long standing open question from~\cite{indyk2000dimensionality} and~\cite{pagh2016locality}.
In particular we get the following two theorems:

\begin{theorem}\label{thm:ham}
   Let $X=\zo^d$ be the Hamming space with metric $\dist(x,y) = \|x \oplus y\| \in [0,d]$ where $\oplus$ is ``xor'' or addition in $\mathbb Z_2$.
   For every choice of $0 < r$, $1 < c$ and $cr \le d/2$,
   we can solve the $(c, r)$-approximate near neighbour problem in Hamming space
   with query time $dn^\rho$ and space usage $dn+n^{1+\rho}$ where $\rho = 1/c + \hat O((\log n)^{-1/4})$.
\end{theorem}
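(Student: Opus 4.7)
The plan is to work in the Locality Sensitive Filter (LSF) framework: build a family $\mathcal F$ of subsets (``filters'') of $\zo^d$, attach to each data point $x\in P$ the list of filters containing $x$, and on query $q$ iterate over the filters containing $q$, inspecting the points attached to them. A Las Vegas guarantee is equivalent to the \emph{covering} property that any pair $(x,y)$ with $\dist(x,y)\le r$ lies in at least one common filter of $\mathcal F$. The performance exponent $\rho$ is controlled by the ratio between the expected number of filters hit by a single point and the expected number of filters shared by a pair at distance $>cr$; to get $\rho=1/c+o(1)$ we must essentially match the collision probabilities of the classical spherical/Hamming-ball filter while retaining covering deterministically.

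I would construct $\mathcal F$ in three layers. \emph{Base layer:} on an inner dimension $d_0=o(\log n)$, enumerate all Hamming-ball filters of an appropriate radius (feasible since $2^{d_0}=n^{o(1)}$) and extract by brute-force search a subfamily that covers every pair at distance $\le r d_0/d$ while matching the random-filter parameters on the $d_0$-subproblem up to a small additive loss. \emph{Tensor layer:} partition $[d]$ into $t=d/d_0$ blocks and form tensor products of one base filter per block; a pair $(x,y)$ is covered by the tensored family provided the weight of $x\oplus y$ is split nearly evenly across the blocks, and the exponent $\rho$ is preserved because tensoring raises both collision probabilities to the same power. \emph{Splitter layer:} to handle adversarial distributions of $x\oplus y$ across blocks, use a family of splitter functions à la~\cite{naor1995splitters}, that is, a small collection of partitions of $[d]$ into $t$ blocks such that for every low-weight $z\in\zo^d$ at least one partition spreads the support of $z$ approximately uniformly. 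Running the tensor construction once per splitter makes covering hold for every pair at multiplicative cost $n^{o(1)}$. Finally, perfect hashing on the filters actually populated by $P$ reduces the storage to $dn+n^{1+\rho}$ and gives constant-time filter lookup.

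The main obstacle is calibrating the three layers so that the \emph{deterministic} covering guarantee does not erode the exponent beyond the promised additive $\hat O((\log n)^{-1/4})$ term. The base layer must be large enough that its filters on $d_0$ coordinates lose only a tiny amount versus random filters, yet small enough that both enumeration and exhaustive search for a covering subfamily cost $n^{o(1)}$; this forces $d_0$ to grow like a specific power of $\log n$. Simultaneously the splitter family must be small enough that iterating over it multiplies the parameters by only $n^{o(1)}$, while being complete for every weight-$\le r$ vector in $\zo^d$. Choosing $d_0$ so that the per-block covering loss and the splitter overhead balance against the $\log n$ budget is what pins down the $(\log n)^{-1/4}$ term; the query-time and space bounds then follow from the standard LSF accounting, summing over filters containing $q$ the expected number of far-away data points, plus the one close point whenever it exists.
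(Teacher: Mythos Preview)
Your three-layer architecture (brute-force base code, tensoring, splitters) matches the paper's inner construction (Lemma~\ref{lem:hamcode}) almost exactly, but you are missing one essential ingredient: a preliminary \emph{one-sided dimensionality reduction} from $d$ down to $B=(\log n)^{O(1)}$ before the splitter/tensor machinery is applied.

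The issue is the size of the splitter family. A $(d,t)$-splitter has size roughly $d^{t}$, so splitting $[d]$ directly into $t=d/d_0$ parts costs $d^{d/d_0}$; with $d_0=o(\log n)$ this is $n^{o(1)}$ only when $d$ itself is $O(\log n)$. For the theorem as stated $d$ is arbitrary (even $d=(\log n)^2$ already makes $d^{d/d_0}$ super-polynomial), and the scheme collapses. The paper flags exactly this after Lemma~\ref{lem:hamcode}: the family ``will not be efficient for dimension $B=(\log n)^{\omega(1)}$''.

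The paper's fix is Lemma~\ref{lem:embed}: hash the $d$ coordinates into $m\approx cr/\epsilon$ buckets, xor within each bucket, and partition into blocks of size $B=27\epsilon^{-3}\log n$. The xor step deterministically never increases Hamming distance (so the covering/no-false-negative guarantee survives), while a Chernoff bound keeps far pairs far in every block with high probability. Only \emph{after} this reduction are splitters applied, now from $[B]$ into $B/b$ parts with $b=\log_4 n$, giving splitter size $B^{B/b}=n^{o(1)}$.

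This is also where the $(\log n)^{-1/4}$ term actually originates. Choosing $\epsilon=(\log n)^{-1/4}$ makes $B=\Theta((\log n)^{7/4})$ and $B/b=\Theta((\log n)^{3/4})$, so $B^{B/b}=e^{\tilde O((\log n)^{3/4})}=n^{o(1)}$, while the $(1-\epsilon)$ slack in the far-pair distance (needed because the reduction is only one-sided) contributes the $\hat O((\log n)^{-1/4})$ additive loss to $\rho$. Your attribution of the loss term purely to ``balancing per-block covering loss and splitter overhead'' misses this: the bottleneck is the precision $\epsilon$ of the dimensionality reduction, not the splitter or base-code parameters.
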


Note: $\hat O$ hides $\log\log n$ factors.

\begin{corollary}\label{cor:ham}
   When $r/d=\Omega((\log n)^{-1/6})$, we get the improved exponent $\rho = \frac{1-cr/d}{c(1-r/d)} + \hat O((\log n)^{-1/3}d/r)$.
\end{corollary}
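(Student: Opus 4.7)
The plan is to use the same Locality Sensitive Filter framework that proves Theorem~\ref{thm:ham}, but with a filter family tuned to the relative radius $r/d$. The exponent $1/c$ in Theorem~\ref{thm:ham} arises from a choice of parameters essentially optimal for the worst-case regime $r/d \to 0$; when $r/d$ is bounded below we can exploit the extra gap between the binomial tails at distances $r$ and $cr$ to obtain the sharper exponent $\frac{1-cr/d}{c(1-r/d)}$, which is the standard filter exponent familiar from the analyses of \cite{becker2016new, christiani2016framework}.

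Concretely, I would replace the base partition by a family of Hamming balls $B(z, td)$ centered at points $z \in \zo^d$, where the radius fraction $t$ is chosen to minimize $\rho(t) = \log(1/p_1(t))/\log(1/p_2(t))$, with $p_i(t)$ the two-point collision probability of the filter at distance $r$ or $cr$ respectively. Standard binomial / KL-divergence estimates show that the minimum is $\frac{1-cr/d}{c(1-r/d)}$ and that the optimizer $t^*$ is a smooth function of $r/d$ and $c$. The new filter family is then fed through the same tensoring, perfect hashing, and splitter machinery used in the proof of Theorem~\ref{thm:ham}; each of these steps preserves both the exponent (up to lower-order terms) and the Las Vegas covering guarantee, since the combinatorial cover argument does not depend on the specific base family, only on its concentration and correctness.

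The delicate part is the error analysis. The base error $\hat O((\log n)^{-1/4})$ of Theorem~\ref{thm:ham} arises from balancing the concentration of filter-collision counts against the granularity of the splitter family. Re-balancing with the reoptimized filter improves the $\log n$ exponent from $-1/4$ to $-1/3$, but the sensitivity of $t^*$ to small changes in $r/d$ introduces an extra multiplicative factor of $d/r$; this explains both the shape of the stated error term and the hypothesis $r/d = \Omega((\log n)^{-1/6})$ needed to keep the correction $o(1)$. The main obstacle I anticipate is verifying that the splitter / perfect-hashing construction still produces an explicit Las Vegas cover once the filter radius depends on $r$: this should follow the same pattern as Theorem~\ref{thm:ham}, but requires re-optimizing the quantitative constants throughout and checking that the concentration tails continue to dominate the failure probability in the stated regime.
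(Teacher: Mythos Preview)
Your proposal misidentifies where the improvement comes from. The filter family of Lemma~\ref{lem:hamcode} is already the ``right'' family of Hamming balls; its analysis yields an exponent of the form $\frac{1-cr'/B}{c(1-r'/B)}$ where $r'/B$ is the relative distance \emph{after} dimensionality reduction. In Theorem~\ref{thm:ham} the xor-based reduction of Lemma~\ref{lem:embed} forces $r'/B = \epsilon/(3c) = o(1)$ regardless of the original $r/d$, which is why the exponent collapses to $1/c$. Re-optimizing the ball radius $t$ cannot recover the lost information about $r/d$.

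The paper's actual argument is much simpler: it keeps the same filter family and merely swaps the dimensionality reduction from Lemma~\ref{lem:embed} to the plain random-partitioning reduction of Lemma~\ref{lem:partitioning}. Partitioning preserves the relative distance, $r'/B = r/d$, so the same computation now gives $\rho = \frac{1-cr/d}{c(1-r/d)}$ directly. The improved error exponent $(\log n)^{-1/3}$ comes from the fact that partitioning needs only $B = O(\epsilon^{-2}(d/cr)\log n)$ rather than $O(\epsilon^{-3}\log n)$, which lets one take $\epsilon = (\log n)^{-1/3}$. The extra $d/r$ factor arises from the $B^{O(B/b)}$ splitter overhead, since now $B/b$ scales with $d/(cr)$; and the hypothesis $r/d = \Omega((\log n)^{-1/6})$ is exactly what is needed to satisfy the constraint $s^2 = O(B/\sqrt{b})$ of Lemma~\ref{lem:hamcode} with these parameter choices. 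None of this involves perfect hashing (that machinery belongs to the set-similarity section), nor any change to the base filter.
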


This improves upon theorem~\ref{thm:ham} when $r/d$ is constant (or slightly sub-constant), including in the important ``random case'', when $r/d = 1/(2c)$ where we get $\rho=1/(2c-1)+o(1)$.

\begin{theorem}\label{thm:sim}
   Let $\simi$ be the Braun-Blanquet similarity $\simi(x,y) = |x\cap y|/\max(|x|,|y|)$.
   For every choice of constants $0 < s_2 < s_1 < 1$, we can solve the $(s_1, s_2)$-similarity problem over $\simi$ with query time
   $dn^\rho$ and space usage $dn+ n^{1+\rho}$ where $\rho = \log s_1/\log s_2 + \hat O((\log n)^{-1/2})$.
\end{theorem}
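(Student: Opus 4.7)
The plan is to construct, explicitly and deterministically, a family $\mathcal{F}$ of ``filters'' where each filter is a $k$-subset $S \subseteq [d]$ whose associated bucket contains exactly the points $x \in P$ with $S \subseteq x$. The data structure stores the $n$ points in these buckets, and a query $q$ scans the buckets $\{S \in \mathcal F : S \subseteq q\}$. I want two properties: (i) \emph{completeness} --- for every pair $x,y$ with $\simi(x,y) \ge s_1$, some $S \in \mathcal F$ satisfies $S \subseteq x \cap y$ (this is what makes the scheme Las Vegas); and (ii) \emph{sparsity} --- for a typical single point $x$, only $n^{\rho+o(1)}$ filters satisfy $S \subseteq x$, and for a far pair $\simi(x,y) \le s_2$ the expected number of shared filters is $1/n$ times this. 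Choosing $k \approx \log_{1/s_1} n$ makes a random $k$-subset of a close intersection survive with probability $\approx 1/n$ relative to the single-set event, which is where the target exponent $\rho = \log s_1/\log s_2$ comes from in the usual LSH accounting.

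First I would reduce to the equal-weight case $|x| = w$ for all $x \in P$ by stratifying by weight, which costs only an $O(\log n)$ factor. In this setting the condition $\simi(x,y) \ge s_1$ is simply $|x \cap y| \ge s_1 w$, so I just need the filter family to ``cover'' every set $T$ of size $\ge s_1 w$ by at least one $k$-subset, while having low covering multiplicity on arbitrary single $w$-sets.

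To remove randomness from a natural random construction (independent uniform $k$-subsets), I would follow the brute-force / tensoring / splitter pipeline advertised in the introduction. Step one: in a small universe $[d_0]$ with $d_0 = \poly\log n$, exhaustive search over $k_0$-subsets yields a near-optimal base family meeting (i)--(ii) with parameters $(s_1, s_2)$; the search is affordable since the number of candidates is $\binom{d_0}{k_0}^{O(1)}$, quasi-polynomial in $n$. Step two: tensor the base family by taking disjoint unions of $t$ base filters placed on $t$ disjoint blocks of $[td_0]$; correctness transfers block-wise, and the effective $(s_1, s_2)$ degrade by a controllable amount. Step three: reduce the genuine $d$-dimensional problem to the tensored dimension $td_0$ by mapping $[d]$ into $t$ blocks via a Naor--Schulman--Srinivasan splitter. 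The point of the splitter is that for every pair $(x,y)$ with $|x \cap y| \ge s_1 w$, some splitter in the family partitions $x \cap y$ nearly proportionally across the $t$ blocks, so each block's intersection is at least $s_1 w/t (1 - o(1))$ and hence gets caught by the base family in that block. Ranging over all splitters in the family restores completeness (i) for every close pair.

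The main obstacle is balancing the three error sources without exceeding the claimed $\hat O((\log n)^{-1/2})$ slack in $\rho$: the brute-force base family loses a small multiplicative slack versus the random one; the splitter has cardinality $t^{O(t)} \log d$, which inflates the per-point filter count; and the tensoring concentration inequality requires $t$ large enough that ``nearly proportional'' balancing actually holds for every intersection simultaneously. Optimizing $t$ against $k$ and $d_0$ is what produces the $(\log n)^{-1/2}$ rate. Once the combinatorial construction is in place, the query-time bound $d n^{\rho+o(1)}$ follows from (ii) by the usual argument --- expected candidates tested is $n \cdot s_2^k = n^{\rho}$ and each verification costs $O(d)$ --- and space $dn + n^{1+\rho+o(1)}$ follows by storing each point once plus each (point, filter) incidence, with perfect hashing used to index buckets in linear space.
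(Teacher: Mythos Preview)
Your overall plan --- reduce to equal weight, build a Tur\'an-type filter family that deterministically covers every large intersection, and bound query cost by the expected number of shared filters for far pairs --- is exactly the paper's. (One minor slip: the filter size should be $k \approx \log_{1/s_2} n$, not $\log_{1/s_1} n$; the exponent $\rho = \log s_1/\log s_2$ falls out once you balance ``filters containing $x$'' against ``expected far candidates''.)

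The substantive gap is in your construction pipeline. You propose brute-forcing a base family on a universe of size $d_0 = \poly\log n$, tensoring across $t$ blocks, and then using a splitter to map $[d]$ onto the $t$ blocks. But a $(d,t)$-splitter is a family of \emph{partitions} of $[d]$ into $t$ pieces; each piece still has size about $d/t$, not $d_0$. So after splitting you face $t$ sub-problems each on a universe of size $d/t$, and your base family only works when that is $\poly\log n$. Forcing $d/t = \poly\log n$ makes $t \approx d/\poly\log n$, at which point the splitter family (of size $d^{\Omega(t)}$) is astronomically large. In short, splitters balance a set across blocks but do not shrink the ambient universe, and for Braun--Blanquet you cannot simply dimension-reduce $[d]$ without destroying the Las Vegas guarantee. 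The paper confronts this exact obstacle and adds an ingredient you are missing: a \emph{perfect hash family} $[d] \to [(k')^2]$ (for the relevant intersection size $k'$) so that for every $k'$-set some hash is injective on it. This lets a Tur\'an system built on the tiny universe $[(k')^2]$ be pulled back to $[d]$ with no loss of covering and only a $k'^{O(1)}\log d$ blow-up in size. The paper's full pipeline is thus four steps --- brute force, splitter/tensoring (your Step~2), perfect hashing to enlarge the universe, and a final random partition --- and without the perfect-hashing step your scheme fails once $d$ is super-polylogarithmic in $n$.
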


For sets of fixed size $w$, the $dn$ terms above can be improved to $wn$.
It is also possible to let $s_1$ and $s_2$ depend on $n$ with some more work.

The first result matches the lower bounds by
\cite{o2014optimal} for ``data independent'' LSH data structures for Hamming distance and improves upon~\cite{pagh2016locality} by a factor of $\log 4>1.38$ in the exponent.
By deterministic reductions from $\ell_2$ to $\ell_1$~\cite{indyk2007uncertainty} and $\ell_1$ to hamming (appendix \ref{ell1tohamming}), this also gives the best currently known Las Vegas data structures for $\ell_1$ and $\ell_2$ in $\mathbb R^d$.
The second result matches the corresponding lower bounds by~\cite{tobias2016} for Braun-Blanquet similarity and, by reduction, Jaccard similarity.
See table~\ref{tab:results} for more comparisons.

Detaching the data structures from our constructions, we give the first explicit constructions of large Turán Systems~\cite{sidorenko1995we}, which are families $\mathcal T$ of $k$-subsets of $[n]$, such that any $r$-subset of $[n]$ is contained in a set in $\mathcal T$.
Lemma~\ref{lem:turan} constructs $(n,k,r)$-Turán Systems using $(n/k)^r e^\chi$ sets, where $\chi=O(\sqrt{r}\log r+\log k+\log\log n)$.
For small values of $k$ this is sharp with the lower bound of $\binom{n}{r}/\binom{k}{r}$, and our systems can be efficiently decoded, which is likely to have other algorithmic applications.

\subsection{Background and Related Work}
The arguably most successful technique for similarity search in high dimensions is Locality-Sensitive Hashing (LSH), introduced in 1998 by~\cite{indyk1998approximate, har2012approximate}.
The idea is to make a random space partition in which similar points are likely to be stored in the same region, thus allowing the search space to be pruned substantially.
The granularity of the space partition (the size/number of regions) is chosen to balance the expected number of points searched against keeping a (reasonably) small probability of pruning away the actual nearest point.
To ensure a high probability of success (good recall) one repeats the above construction, independently at random, a small polynomial (in $n$) number of times.

In \cite{pagh2016locality, arasu2006efficient} it was shown that one could change the above algorithm to not do the repetitions independently.
(Eliminating the error probability of an algorithm by independent repetitions, of course, takes an infinite number of repetitions.)
By making correlated repetitions, it was shown possible to reach zero false negatives much faster, after only polynomially many repetitions.
This means, for example, that they needed more repetitions than LSH does to get $0.99$ success rate, but fewer than LSH needs for success rate $1-2^{-n}$.

An alternative to LSH was introduced by~\cite{becker2016new, dubiner2010bucketing}.
It is referred to as Locality Sensitive Filters, or LSF.
While it achieves the same bounds as LSH, LSF has the advantage of giving more control to the algorithm designer for balancing different performance metrics.
For example, it typically allows better results for low dimensional data, $d=O(\log n)$, and space/time trade-offs~\cite{andoni2016optimal}.
The idea is to sample a large number of random sections of the space.
In contrast to LSH these sections are not necessarily partitions and may overlap heavily.
For example, for points on the sphere $S^{d-1}$ the sections may be defined by balls around the points of a spherical code.
One issue compared to LSH is that the number of sections in LSF is very large.
This means we need to impose some structure so we can efficiently find all sections containing a particular point.
With LSH the space partitioning automatically provided such an algorithm, but for LSF it is common to use a kind of random product code.
(An interesting alternative is~\cite{tobias2016}, which uses a random branching processes.)
LSF is similar to LSH in that it only approaches 100\% success rate as the number of sections goes to infinity.

The work in this paper can be viewed as way of constructing correlated, efficiently decodable filters for Hamming space and Braun-Blanquet similarity.
That is, our filters guarantee that any two close points are contained in a shared section, without having an infinite number of sections.
Indeed the number of sections needed is equal to that needed by random constructions for achieving constant success probability, up to $n^{o(1)}$ factors.
It is not crucial that our algorithms are in the LSF framework rather than LSH.
Our techniques can make correlated LSH space partitions of optimal size as well as filters.
However the more general LSF framework allows for us to better show of the strength of the techniques.

One very important line of LSH/LSF research, that we don't touch upon in this paper, is that of data dependency.
In the seminal papers \cite{andoni2014beyond, andoni2015optimal, andoni2016optimal} it was shown that the performance of space partition based data structures can be improved, even in the worst case, by considering the layout of the points in the data base.
Using clustering, certain bad cases for LSH/LSF can be removed, leaving only the case of ``near random'' points to be considered, on which LSH works very well.
It seems possible to make Las Vegas versions of these algorithms as well, since our approach gives the optimal performance in these near random cases.
However one would need to find a way to derandomize the randomized clustering step used in their approach.

There is of course also a literature of deterministic and Las Vegas data structures not using LSH.
As a baseline, we note that the ``brute force'' algorithm that stores every data point in a hash table, and given a query, $q\in\zo^d$, looks up every $\sum_{k=1}^r{d\choose k}$ point of Hamming distance most $r$.
This requires $r\log(d/r) < \log n$ to be sub-linear, so for a typical example of $d=(\log n)^2$ and $r=d/10$ it won't be practical.
In \cite{cole2004dictionary} this was somewhat improved to yield $n (\log n)^r$ time, but it still requires $r=O(\frac{\log n}{\log\log n})$ for queries to be sub-linear.
We can also imagine storing the nearest neighbour for every point in $\{0,1\}^d$.
Such an approach would give fast (constant time) queries, but the space required would be exponential in $r$.

In Euclidean space ($\ell_2$ metric) the classical K-d tree algorithm~\cite{bentley1975multidimensional} is of course deterministic, but it has query time $n^{1-1/d}$,
so we need $d=O(1)$ for it to be strongly sub-linear.
Allowing approximation, but still deterministic, \cite{arya1998optimal} found a $(\frac d{c-1})^d$ algorithm for $c>1$ approximation.
They thus get sublinear queries for $d=O(\frac{\log n}{\log\log n})$.

For large approximation factors \cite{har2012approximate} gave a deterministic data structure with query time $O(d\log n)$, but space and preprocessing more than $n\cdot O(1/(c-1))^d$.
In a different line of work, \cite{indyk2000dimensionality} gave a deterministic $(d\epsilon^{-1}\log n)^{O(1)}$ query time, fully deterministic algorithm with space usage $n^{O(1/\epsilon^6)}$ for a $3+\epsilon$ approximation.

See Table~\ref{tab:results} for an easier comparison of the different results and spaces.

\def\arraystretch{1.5}
\begin{table*}
   \center
   \begin{tabular}{| p{3cm} | p{1.9cm} | p{2cm} | p{6.9cm} |}
     \hline
     \bfseries Reference
     & \bfseries Space
     & \bfseries Exponent, search time
     & \bfseries Comments
     \\\hline
     \cite{bentley1975multidimensional}
     & $\ell_2$
     & $1-1/d$
     & Exact algorithm, Fully deterministic.
     \\
     \cite{cole2004dictionary}
     & Hamming
     & $r \frac{\log\log n}{\log n}$
     & Sub-linear for $r < \frac{\log n}{\log\log n}$. Exact.
     \\
     \cite{arya1998optimal}
     & $\ell_2$
     & $d\frac{\log(d/(c-1))}{\log n}$
     & Sub-linear for $d < \frac{\log n}{\log\log n}$.
     \\
     \hline
     \cite{har2012approximate}
     & Hamming
     & $o(1)$
     & $c$-approximation, Fully deterministic, $(1/(c-1))^d$ space.
     \\
     \cite{indyk2000dimensionality}
     & Hamming
     & $o(1)$
     & $(3+\epsilon)$-approximation, Fully deterministic, $n^{\Omega(1/\epsilon^6)}$ space.
     \\
     \cite{arasu2006efficient}
     & Hamming
     & $\approx3/c$
     & The paper makes no theoretical claims on the exponent.
     \\
     \cite{pagh2016locality}
     & Hamming
     & $1.38/c$
     & Exponent $1/c$ when $r=o(\log n)$ or $(\log n)/(cr)\in \mathbb N$.
     \\
     \cite{pacuk2016locality}
     & $\ell_p$
     & $O(d^{1-1/p}/c)$
     & Sub-linear for $\ell_2$ when $c=\omega(\sqrt{d})$.
     \\
     \textbf{This paper}
     & Hamming, $\ell_1$, $\ell_2$
     & $1/c$
     & Actual exponent is $\frac{1-cr/d}{c(1-r/d)}$ which improves to $1/(2c-1)$ for $cr \approx d/2$.
     \\
     \hline
     \cite{pagh2016locality}
     & Braun-Blanquet
     & $1.38\,\frac{1-b_1}{1-b_2}$
     & Via reduction to Hamming. Requires sets of equal weight.
     \\
     \textbf{This paper}
     & Braun-Blanquet
     & $\frac{\log1/b_1}{\log1/b_2}$
     & See~\cite{tobias2016} figure 2 for a comparison with~\cite{pagh2016locality}.
     \\\hline
  \end{tabular}
  \caption{
     Comparison of Las Vegas algorithms for high dimensional near neighbour problems.
     The exponent is the value $\rho$, such that the data structure has query time $n^{\rho + o(1)}$.
     All listed algorithms, except for~\cite{indyk2000dimensionality} use less than $n^2$ space.
     All algorithms give $c$-approximations, except for the first two, and for ~\cite{indyk2000dimensionality}, which is a $(3+\epsilon)$-approximation.
  }
   \label{tab:results}
\end{table*}

\subsection{Techniques}

Our main new technique is a combination of `splitters' as defined by ~\cite{naor1995splitters, alon2006algorithmic}, and `tensoring' which is a common technique in the LSH literature.

Tensoring means constructing a large space partition $P\subseteq \mathcal{P}(X)$ by taking multiple smaller random partitions $P_1, P_2, \dots$ and taking all the intersections $P = \{p_1\cap p_2,\, \dots \mid p_1\in P_1,\, p_2\in P_2, \dots\}$.
Often the implicit partition $P$ is nearly as good as a fully random partition of equal size, while it is cheaper to store in memory and allows much faster lookups of which section covers a given point.
In this paper we are particularly interested in $P_i$'s that partition different small sub-spaces, such that $P$ is used to increase the dimension of a small, explicit, good partition.

Unfortunately tensoring doesn't seem to be directly applicable for deterministic constructions, since deterministic space partitions tend to have some overhead that gets amplified by the product construction.
This is the reason why \cite{pagh2016locality} constructs hash functions directly using algebraic methods, rather than starting with a small hash function and `amplifying' as is common for LSH.
Algebraic methods are great when they exist, but they tend to be hard to find, and it would be a tough order to find them for every similarity measure we would like to make a data structure for.

It turns out we can use splitters to help make tensoring work deterministically.
Roughly, these are generalizations of perfect hash functions.
However, where a $(d,m,k)$-perfect hash family guarantees that for any set $S\subseteq[d]$ of size $k$, there is a function $\pi : [d] \to [m]$ such that $|\pi(S)|=k$, a $(d,m)$-splitter instead guarantees that the is some $\pi$ such that $|S\cap\pi^{-1}(i)|=d/m$ for each $i=1,\dots,m$; or as close as possible if $m$ does not divide $d$.
That is, for any $S$ there is some $\pi$ that `splits' $S$ evenly between $m$ buckets.

Using splitters with tensoring, we greatly limit the number of combinations of smaller space partitions that are needed to guarantee covering.
We use this to amplify partitions found probabilistically and verified deterministically.
The random aspect is however only for convenience, since the greedy set cover algorithm would suffice as well, as is done in~\cite{alon2006algorithmic}.
We don't quite get a general reduction from Monte Carlo to Las Vegas LSH data structures, but we show how two state of the art algorithms may be converted at a negligible overhead.

\smallskip

A final technique to make everything come together is the use of dimensionality reductions.
We can't quite use the standard bit-sampling and Johnson–Lindenstrauss lemmas, since those may (though unlikely) increase the distance between originally near points.
Instead we use two dimensionality reduction lemmas based on partitioning.
Similarly to~\cite{pagh2016locality} and others, we fix a random permutation.
Then given a vector $x\in\zo^d$ we permute the coordinates and partition into blocks $x_1, \dots, x_{d/B}$ of size $B$.
For some linear distance function, $\dist(x,y)=\dist(x_1,y_1)+\dots+\dist(x_{d/B},y_{d/B})$, which implies that for some $i$ we must have $\dist(x_i,y_i)\le\dist(x,y)B/d$.
Running the algorithm separately for each set of blocks guarantee that we no pair gets mapped too far away from each other, while the randomness of the permutation lets us apply standard Chernoff bounds on how close the remaining points get.

Partitioning, however, doesn't work well if distances are very small, $cr<<d$.
This is because we need $B = \frac{d}{cr}\epsilon^{-2}\log n$ to get the said Chernoff bounds on distances for points at distance $cr$.
We solve this problem by hashing coordinates into buckets of $\approx cr/\epsilon$ and taking the xor of each bucket.
This has the effect of increasing distances and thereby allowing us to partition into blocks of size $\approx\epsilon^{-3}\log n$.
A similar technique was used for dimensionality reduction in~\cite{kushilevitz2000efficient}, but without deterministic guarantees.
The problem is tackled fully deterministically in~\cite{indyk2000dimensionality} using codes, but with the slightly worse bound of $\epsilon^{-4}\log n$.

\smallskip

For the second problem of Braun-Blanquet similarity we also need a way to reduce the dimension to a manageble size.
Using randomized reductions (for example partitioning), we can reduce to $|x\cap y|\sim \log n$ without introducing too many false positives.
However we could easily have e.g. universe size $d=(\log n)^{100}$ and $|x|=|y|=(\log n)^2$, which is much too high a dimension for our splitter technique to work.
There is probably no hope of actually reducing $d$, since increasing $|x|/d$ and $|y|/d$ makes the problem we are trying to solve easier, and such a reduction would thus break LSH lower bounds.

Instead we introduce tensoring technique based on perfect hash functions, which allows us to create Turán Systems with very large universe sizes for very little overhead.

\smallskip

In the process of showing our results, we show a useful bound on the ratio between two binomial coefficients, which may be of separate interest.

\subsection{Notation}

We use $[d] = \{1,\dots,d\}$ as convenient notation sets of a given size.
Somewhat overloading notation, for a predicate $P$, we also use the Iversonian notation $[P]$ for a value that is 1 if $P$ is true and 0 otherwise.

For a set $x\subseteq[d]$, we will sometimes think of it as a subset of the universe $[d]$, and at other times as a vector $x\in\zo^d$, where $x_i=1$ indicates that $i\in x$.
This correspondence goes further, and we may refer to the set size $|x|$ or the vector norm $\|x\|$, which is always the Hamming norm, $\|x\|=\sum_{i=1}^d x_i$.
Similarly for two sets or points $x,y\in\zo^d$, we may refer to the inner product $\langle x,y\rangle=\sum_{i=1}^d x_i y_i$ or to the size of their intersection $|x\cap y|$.

We use $S \times T = \{(s,t) : s\in S, t\in T\}$ for the cross product,
and $x \oplus y$ for symmetric difference (or `xor').
$\mathcal P(X)$ is the power set of $X$, such that $x\subseteq X \equiv x\in\mathcal P(X)$.
$\binom{X}{k}$ denotes all subsets of $X$ of size $k$.

For at set $S\subseteq[d]$ and a vector $x\in\zo^d$, we let $x_S$ be the projection of $x$ onto $S$.
This is an $|S|$-dimensional vector, consisting of the coordinates $x_S=\langle x_i : i\in S\rangle$ in the natural order of $i$.
For a function $f : [a] \to [b]$ we let $f^{-1} : \mathcal P([b]) \to \mathcal P([a])$ be the `pullback' of $f$, such that $f^{-1}(S) = \{i \in [a] \mid f(i)\in S\}$.
For example, for $x\in\zo^a$, we may write $x_{f^{-1}(1)}$ to be the vector $x$ projected onto the coordinates of $f^{-1}(\{1\})$.

Sometimes when a variable is $\omega(1)$ we may assume it is integral, when this is achievable easily by rounding that only perturbs the result by an insignificant $o(1)$ amount.

The functional $\poly(a,b,\dots)$ means any polynomial combination of the arguments, essentially the same set as $(a\cdot b \dots)^{\pm O(1)}$.

\subsection{Organization}

We start by laying out the general framework shared between our algorithms.
We use a relatively common approach to modern near neighbour data structures, but the overview also helps establish some notation used in the later sections.

The second part of section~\ref{sec:overview} describes the main ideas and intuition on how we achieve our results.
In particular it defines the concept of `splitters' and how they may be used to create list-decodable codes for various measures.
The section finally touches upon the issues we encounter on dimensionality reduction, which we can use to an extent, but which is restricted by our requirement of `1-sided' errors.

In sections~\ref{sec:hamming} and~\ref{sec:similarity} we prove the main theorems from the introduction.
The sections follow a similar pattern:
First we introduce a filter family and prove its existence,
then we show a dimensionality reduction lemma and analyze the resulting algorithm.

\section{Overview}\label{sec:overview}

Both algorithms in this paper follow the structure of the Locality Sensitive Filter framework, which is as follows:
For a given universe $U$,
we define a family $\mathcal F$ of `filters' equipped with
a (possibly random) function $F : U \to \mathcal P(\mathcal F)$,
which assigns every point a set of filters.

Typically, $\mathcal F$ will be a generous covering of $U$, and $F(x)$ will be the sets that cover the point $x$.
Critically, any pair $x,y$ that is close/similar enough in $U$ must share a filter, such that $F(X)\cap F(Y)\neq\emptyset$.
Further we will want that pairs $x,y$ that are sufficiently far/dissimilar only rarely share a filter, such that $E[|F(x)\cap F(Y)|]$ is tiny.

To construct the data structure, we are given a set of data points $P\subseteq U$.
We compute $F(x)$ for every $x\in P$
and store the points in a (hash) map $T : \mathcal F \to \mathcal P(P)$.
For any point $x\in P$ and filter $f\in F(x)$, we store $x \in T[f]$.
Note that the same $x$ may be stored in multiple different buckets.

To query the data structure with a point $x\in U$, we compute the distance/similarity between $x$ and every point $y \in \bigcup_{f\in F(x)} T[f]$, returning the first suitable candidate, if any.

There are many possible variations of the scheme, such as sampling $\mathcal F$ from a distribution of filter families.
In case we want a data structure with space/time trade-offs, we can use different $\mathcal F$ functions for data points and query points.
However in this article we will not include these extensions.

We note that while it is easy to delete and insert new points in the data structure after creation, we are going to choose $\mathcal F$ parametrized on the total number of points, $|P|$.
This makes our data structure essentially static, but luckily~\cite{overmars1981worst} have found general, deterministic reductions from dynamic to static data structures.

\subsection{Intuition}

The main challenge in this paper will be the construction of filter families $\mathcal F$ which are: (i) not too large; (ii) have a $F(\cdot)$ function that is efficient to evaluate; and most importantly, (iii) guarantee that all sufficiently close/similar points always share a filter.
The last requirement is what makes our algorithm different from previous results, which only had this guarantee probabilistically.

For concreteness, let us consider the Hamming space problem.
Observe that for very low dimensional spaces, $d=(1+o(1))\log n$, we can afford to spend exponential time designing a filter family.
In particular we can formulate a set cover problem, in which we wish to cover each pair of points at distance $\le r$ with Hamming balls of radius $s$.
This gives a family that is not much larger than what can be achieved probabilistically, and which is guaranteed to work.
Furthermore, this family has sublinear size ($n^{o(1)}$), making $F(x)$ efficient to evaluate, since we can simply enumerate all of the Hamming balls and check if $x$ is contained.

The challenge is to scale this approach up to general $d$.

Using a standard approach of randomly partitioning the coordinates, we can reduce the dimension to $(\log n)^{1+\epsilon}$.
This is basically dimensionality reduction by bit sampling, but it produces $d/\log n$ different subspaces, such that for any pair $x,y$ there is at least one subspace in which their distance is not increased.
We are left with a gap from $(\log n)^{1+\epsilon}$ down to $\log n$.
Bridging this gap turns out to require a lot more work.
Intuitively we cannot hope to simply use a stronger dimensionality reduction, since $\log n$ dimensions only just fit $n$ points in Hamming space and would surely make too many non-similar points collide to be effective.

A natural idea is to construct higher-dimensional filter families by combining multiple smaller families.
This is a common technique from the first list decodable error correcting codes, for example~\cite{elias1957list}:
Given a code $\mathcal C\subseteq\zo^d$ with covering radius $r$, we can create a new code $\mathcal C_2\subseteq\zo^{2d}$ of size $|\mathcal C|^2$ with covering radius $2r$ by taking every pair of code words and concatenating them.
Then for a given point $x\in\zo^{2d}$ we can decode the first and last $d$ coordinates of $x=x_1x_2$ separately in $\mathcal C$.
This returns two code words $c_1, c_2$ such that $\dist(x_1,c_1)\le r$ and $\dist(x_2,c_2)\le r$.
By construction $c_1c_2$ is in $\mathcal C_2$ and $\dist(x_1x_2, c_1c_2)\le 2r$.

This combination idea gives is nice when it applies.
When used with high quality inner codes, the combined code is close to optimal as well.
In most cases the properties of $\mathcal C$ that we are interested in won't decompose as nicely.
With the example of our Hamming ball filter family, consider $x,y\in\zo^{2d}$ with distance $\dist(x,y)=r$.
If we split $x=x_1x_2$ an $y=y_1y_2$ we could decode the smaller vectors individually in a smaller family,
however we don't have any guarantee on $\dist(x_1,y_1)$ and $\dist(x_2,y_2)$ individually, so the inner code might fail to return anything at all.

To solve this problem, we use a classic tool for creating combinatorial objects, such as our filter families, called `splitters'.
Originally introduced by~\cite{mairson1983program,naor1995splitters} they are defined as follows:
\begin{definition}[Splitter]\label{defn:splitter}
   A $(B, l)$-splitter $H$ is a family of functions from
   $\{1,\dots,B\}$ to $\{1,\dots,l\}$ such that for all $S \subseteq \{1,\dots,B\}$, there is a $h \in H$ that splits $S$ perfectly,
   i.e., into equal-sized parts $h^{-1}(j) \cap S$, $j = 1, 2, \dots, l$.
   (or as equal as possible, if $l$ does not divide $|S|$).
\end{definition}
The size of $H$ is at most $B^l$, and using either a generalization by~\cite{alon2006algorithmic} or a simple combinatorial argument, it is possible to ensure that the size of each part $|h^{-1}(j)|$ equals $B/l$ (or as close as possible).

We now explain how splitters help us combine filter families.
Let $H$ be a splitter from $\{1,\dots,2d\}$ to $\{1,2\}$.
For any $x,y\in\zo^{2d}$ we can let $S$ be the set of coordinates on which $x$ and $y$ differ.
Then there is a function $h\in H$ such that $|h^{-1}(1)\cap S|=|h^{-1}(2)\cap S|=|S|/2$.
(Or as close as possible if $|S|$ is odd.)
If we repeat the failed product combination from above for every $h\in H$ we get a way to scale our family from $d$ to $2d$ dimensions, taking the size from $|\mathcal F|$ to $(2d)^2|\mathcal F|^2$.
That is, we only suffer a small polynomial loss.
In the end it turns out that the loss suffered from creating filter families using this divide and conquer approach can be contained, thus solving our problem.

An issue that comes up, is that the `property' we are splitting (such as distance) can often be a lot smaller than the dimensionality $d$ of the points.
In particular this original dimensionality reduction may suffer an overhead factor $d/|S|$, which could make it nearly useless if $|S|$ is close to $1$.
To solve this problem, both of our algorithms employ special half-deterministic dimensionality reductions, which ensures that the interesting properties get `boosted' and end up taking a reasonable amount of `space'.
These reductions are themselves not complicated, but they are somewhat non-standard, since they can only have a one sided error.
For example for Hamming distance we need that the mapped distance is never larger than its expected value, since otherwise we could get false negatives.

For Hamming distance our dimension reduction works by hashing the coordinates randomly from [d] to [m] taking the xor of the coordinates in each bucket.
This is related to the $\beta$-test in~\cite{kushilevitz2000efficient}.
The idea is that if $x$ and $y$ are different in only a few coordinates, then taking a small random group of coordinates, it is likely to contain at most one where they differ.
If no coordinates differ, then after taking the xor the result will still be the same, but if exactly one (or an odd number) of coordinates differ, the resulting coordinate will be different.

\medskip

For set similarity things are a bit more hairy.
There is no data independent dimensionality reduction that can reduce the size of the domain.
In fact this would break the lower bounds of e.g.~\cite{tobias2016}.
Instead we come up with a new construction based on perfect hash functions, which greatly increases the number of filters needed, but only as much as we can afford given the easier sub-problems.

The idea can be motivated as follows:
Suppose you have to make a family of sets $\mathcal T \subseteq \mathcal P([n])$ of size $r$, such that for each each set $K \subseteq [n]$ of size $|K|=k$ there is an $R\in\mathcal T$ such that $R\subseteq K$.
Then you might try to extend this to the domain $[2n]$ as follows:
For each $R\in\mathcal T$ and each $b\in\zo^r$, make a new set $R' = \{i+n b_i : i\in R\}$ (where $b_i$ is padded appropriately).
This creates $2^r|\mathcal T|$ new sets, which can be shown to have the property, that for any set $K\subseteq[2n]$ of size $|K|=k$, there is an $R'$ such that $R'\subseteq K$.
That is as long as $K\cap(K-n)=\emptyset$, since then we can consider $R\in \mathcal T$ such that $R\subseteq (K\bmod n)$.
That is $R$ is a subset of $K$ `folded` into the first $[n]$ elements, and one of the $R'$ will be a subset of $K$.

Because of the requirement that $|K\bmod n|=k$ we need to use perfect hashing as a part of the construction.
However for non-Las Vegas algorithms, a similar approach may be useful, simply using random hashing.

\section{Hamming Space Data Structure}\label{sec:hamming}

We will give an efficient filter family for LSF in Hamming space.
Afterwards we will analyze it and choose the most optimal parameters, including dimensionality reduction.

\begin{lemma}\label{lem:hamcode}
   For every choice of parameters $B, b\in\mathbb N$, $b\le B$, $0<r<B/2$ and $s^2=O(B/\sqrt{b})$, there exists a code $\mathcal C\subseteq \zo^B$ of size $|\mathcal C| = \poly(B^{B/b})\, \exp(\frac{s^2}{2(1-r/d)})$
   with the following properties:
   \begin{enumerate}
      \item Given $x\in\zo^B$ we can find a subset
         $C(x) \subseteq \{c\in \mathcal C : \dist(x,c)\le B/2-s\sqrt{B}/2\}$
         in time $|C(x)| + \poly(B^{B/b},e^{s^2b/B})$.
      \item For all pairs $x,y\in\zo^B$ with $\dist(x,y)\le r$ there is some common nearby code word $c\in C(x)\cap C(y)$.
      \item The code requires $4^b\poly(B^{B/b}, e^{s^2b/B})$ time for preprocessing and $\poly(B^{B/b}, e^{s^2b/B})$ space.
   \end{enumerate}
\end{lemma}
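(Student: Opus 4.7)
The plan is to build $\mathcal C$ by tensoring a brute-force ``inner'' code against a splitter, along the lines of Section~\ref{sec:overview}. I would take a $(B,B/b)$-splitter $H$ in the equal-sized-parts variant of~\cite{alon2006algorithmic}: a family of $|H|\le B^{B/b}$ functions $h:[B]\to[B/b]$ with $|h^{-1}(j)|=b$ for every $j$, such that for each $S\subseteq[B]$ some $h\in H$ satisfies $|S\cap h^{-1}(j)|=|S|\,b/B$ on every block. Alongside it I would build an inner code $\mathcal C_0\subseteq\zo^b$ with the property that every pair $u,v\in\zo^b$ at distance $\le r'=rb/B$ shares a code word within block-distance $b/2-s'\sqrt b/2$ of both, where $s':=s\sqrt{b/B}$. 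The outer code $\mathcal C$ then consists of all words in $\zo^B$ obtained by fixing $h\in H$ and a tuple $(c_1,\dots,c_{B/b})\in\mathcal C_0^{B/b}$ and planting $c_j$ on the block $h^{-1}(j)$.

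For the inner code I would pick $c\in\zo^b$ uniformly at random. Partitioning coordinates by whether $u_i=v_i$ writes $\dist(u,c)=X_A+X_D$ and $\dist(v,c)=X_A+r'-X_D$ with independent $X_A\sim\Bin(b-r',1/2)$, $X_D\sim\Bin(r',1/2)$. A moderate-deviation estimate for the joint event that both distances are $\le b/2-s'\sqrt b/2$ gives covering probability $p\ge\poly(1/b)\exp\!\bigl(-s'^2/(2(1-r'/b))\bigr)$; the hypothesis $s^2=O(B/\sqrt b)$ puts $s'\sqrt b=O(b^{3/4})$ precisely in the regime where the quartic correction $t^4/b^3$ to the Gaussian binomial tail stays $O(1)$ and can be absorbed into the polynomial prefactor. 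A union bound over the $\le 4^b$ pairs then shows that $|\mathcal C_0|=\poly(b)\exp(s'^2/(2(1-r'/b)))$ random samples suffice with positive probability, and brute-force verifying each candidate against all pairs makes the construction deterministic at cost $4^b\cdot|\mathcal C_0|\cdot\poly(b)$.

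The covering property is then immediate: given $x,y\in\zo^B$ with $\dist(x,y)\le r$, pick $h\in H$ that evenly splits $\{i:x_i\ne y_i\}$, so each block satisfies $\dist(x_{h^{-1}(j)},y_{h^{-1}(j)})\le rb/B$; the inner code supplies a common $c_j$ within $b/2-s'\sqrt b/2$ of both block projections; and concatenating yields a code word of $\mathcal C$ at total distance $\le(B/b)(b/2-s'\sqrt b/2)=B/2-s\sqrt B/2$ from both $x$ and $y$. This word lies in both $C(x)$ and $C(y)$ by the decoding procedure below. The size bound follows: $|\mathcal C|\le|H|\cdot|\mathcal C_0|^{B/b}\le B^{B/b}\cdot\poly(b)^{B/b}\cdot\exp(s^2/(2(1-r/B)))=\poly(B^{B/b})\exp(s^2/(2(1-r/B)))$, matching the statement (with the ``$d$'' in the exponent read as $B$).

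For decoding, on input $x$ I would iterate over the $\le B^{B/b}$ splitter functions; for each $h$ and block $j$, scan $\mathcal C_0$ to list the inner words within $b/2-s'\sqrt b/2$ of $x_{h^{-1}(j)}$ (cost $|\mathcal C_0|\cdot b$ per block); then enumerate the product of the per-block lists, each combination giving one outer word of $C(x)$ within $B/2-s\sqrt B/2$ of $x$ by the same block-wise sum. Since $|\mathcal C_0|=\poly(b)\exp(s^2b/(2B(1-r/B)))\le\poly(b)\,e^{s^2b/B}$ (using $r<B/2$), the scans plus the splitter loop cost $\poly(B^{B/b},e^{s^2b/B})$, and the product enumeration charges $O(1)$ per output, giving the claimed $|C(x)|+\poly(B^{B/b},e^{s^2b/B})$ bound. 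The main technical obstacle is the inner-code tail estimate: obtaining the exponent $1/(2(1-r/B))$ \emph{sharply}, rather than with a lossier constant, requires careful control of the correlation between the two binomial tails through the $X_A,X_D$ decomposition, and is exactly what the hypothesis $s^2=O(B/\sqrt b)$ makes possible.
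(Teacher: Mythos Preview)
Your proposal is correct and follows essentially the same route as the paper: build a brute-force inner code on $\zo^b$ by random sampling plus exhaustive verification, then tensor against a $(B,B/b)$-splitter with equal block sizes, and decode block-wise. The paper packages your $X_A,X_D$ moderate-deviation computation into a separate lemma (Lemma~\ref{lem:int}) bounding the volume of the intersection of two Hamming balls, proved via Stirling and the same $i,j$ decomposition you describe; your identification of the hypothesis $s^2=O(B/\sqrt b)$ as exactly what keeps the quartic correction bounded matches the condition $s'\le b^{1/4}/2$ in that lemma.
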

%

Note that we don't actually guarantee that our `list-decoding' function $C(x)$ returns \emph{all} nearby code words, just that it returns enough for property (2) which is what we need for the data structure.
This is however not intrinsic to the methods and using a decoding algorithm similar to~\cite{becker2016new} would make it a `true' list-decoding.

\begin{proof}
   We first show how to construct a family for $\zo^b$, then how to enlarge it for $\zo^B$.
   We then show that it has the covering property and finally the decoding properties.
   In order for our probabilistic arguments to go through, we need the following lemma, which follows from Stirling's Approximation:

   \begin{restatable}{lemma}{intt}
   \label{lem:int}
         For $t=\frac d2-\frac{s\sqrt d}2$, $1\le s \le d^{1/4}/2$ and $r<d/2$,
         Let $x,y\in\zo^d$ be two points at distance $\dist(x,y)=r$, and
         let $I = |\{z\in\zo^d : \dist(z,x) \le t, \dist(z,y) \le t\}|$
         be the size of the intersection of two hamming balls around $x$ and $y$ of radius $t$,
         then
         \begin{align*}
            \frac{7}{8d} \exp\left(\tfrac{-s^2}{2(1-r/d)}\right)
            \le
            I\, 2^{-d} \le \exp\left(\tfrac{-s^2}{2(1-r/d)}\right)
         \end{align*}
   \end{restatable}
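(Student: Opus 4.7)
The plan is to reduce $I$ to a double sum of binomial coefficients and then bound the sum from above by a one-constraint Chernoff/Hoeffding argument and from below by a single Stirling estimate at the mode. By translation invariance of Hamming distance, assume $x=0^d$ and $y$ is the indicator of $[r]$; then, writing $a=|z\cap[r]|$ and $b=|z\cap([d]\setminus[r])|$ for $z\in\zo^d$, we have $\dist(z,x)=a+b$ and $\dist(z,y)=(r-a)+b$, so
\[
 I \;=\; \sum_{a,b}\binom{r}{a}\binom{d-r}{b}\,\mathbf{1}\!\left[a+b\le t,\; r-a+b\le t\right].
\]

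\textbf{Upper bound.} Adding the two constraints gives $r+2b\le 2t$, i.e.\ $b\le t-r/2=(d-r)/2-s\sqrt{d}/2$. Dropping the $a$-constraint (using $\sum_a\binom{r}{a}=2^r$) yields
\[
 I\,2^{-d} \;\le\; 2^{-(d-r)}\!\!\sum_{b\le t-r/2}\!\!\binom{d-r}{b} \;=\; \Pr\!\left[Y\le(d-r)/2-s\sqrt{d}/2\right],
\]
for $Y\sim\Bin(d-r,1/2)$. Hoeffding's inequality on centred Bernoullis bounds the right side by $\exp(-2(s\sqrt d/2)^2/(d-r))=\exp(-s^2/(2(1-r/d)))$, which is exactly the required upper bound.

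\textbf{Lower bound.} Keep only the single term at the mode: let $a$ be the integer nearest $r/2$ and $b=\lfloor(d-r)/2-s\sqrt d/2\rfloor$; both constraints then hold, so this term is counted in $I$. By Stirling, $\binom{r}{\lfloor r/2\rfloor}2^{-r}\ge\sqrt{2/(\pi r)}(1-O(1/r))$, and expanding $\ln\binom{d-r}{(d-r)/2-j}$ through fourth order with $j=\lceil s\sqrt d/2\rceil$ gives
\[
 \binom{d-r}{b}\,2^{-(d-r)} \;\ge\; \sqrt{\tfrac{2}{\pi(d-r)}}\,\exp\!\left(-\tfrac{2j^2}{d-r}-O\!\left(\tfrac{j^4}{(d-r)^3}+\tfrac{1}{d-r}\right)\right).
\]
The hypotheses $r<d/2$ and $s\le d^{1/4}/2$ force $d-r\ge d/2$ and $j^4/(d-r)^3=O(1)$, so every error term in the exponent is bounded by a small absolute constant. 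Multiplying yields $I\,2^{-d}\ge(2/(\pi\sqrt{r(d-r)}))\,e^{-s^2/(2(1-r/d))}\,e^{-O(1)}$, and since $\sqrt{r(d-r)}\le d/2$ the prefactor is at least $4/(\pi d)\approx 1.27/d$, which beats $7/(8d)=0.875/d$ with room to absorb the $e^{-O(1)}$ loss.

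\textbf{Main obstacle.} The delicate point is nailing the explicit constant $7/8$: the ratio $(4/\pi)/(7/8)\approx 1.45$ leaves only a modest multiplicative budget for all Stirling and Taylor corrections. The assumption $s\le d^{1/4}/2$ is calibrated precisely for this---it keeps the fourth-order Taylor term $s^4 d^2/(d-r)^3$ at $O(1)$---while the $O(1/r)+O(1/(d-r))$ Robbins-style pre-factor corrections together with the parity and small-$r$ boundary cases ($r$ odd, or $r\in\{0,1\}$, where one Stirling factor must be replaced by a direct evaluation) require explicit but routine case analysis to verify the constant actually clears $7/8$.
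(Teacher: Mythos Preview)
Your proposal is correct and follows essentially the same route as the paper: express $I$ as a double binomial sum via the $(a,b)$ decomposition, relax to the single constraint $b\le t-r/2$ for the upper bound (the paper phrases the resulting tail bound via the entropy inequality $\sum_{k\le pn}\binom{n}{k}\le e^{n\HE(p)}$ with $\HE(1/2-x)\le\log2-2x^2$, which is exactly your Hoeffding bound), and take the single mode term $a\approx r/2$, $b=\lfloor t-r/2\rfloor$ for the lower bound. The one place the paper goes further is that it tracks the lower-bound constants explicitly---using the uniform Stirling form $\binom{n}{k}\ge e^{n\HE(k/n)}/\sqrt{8(1-k/n)k}$ together with the fourth-order entropy bound $\HE(1/2-x)\ge\log2-2x^2-4x^4$---to arrive at $\tfrac1d\cdot\tfrac{1-2s^4/d}{\sqrt{1-4s^2/d}}\ge\tfrac{7}{8d}$, whereas you leave this as an $e^{-O(1)}$ to be discharged by case analysis; but you have correctly identified both the obstacle and the reason $s\le d^{1/4}/2$ is exactly the hypothesis that makes it go through.
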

   Proof is in the appendix.

   Let $s' = s\sqrt{b/B}$.
   Consider any two points $x,y\in\zo^b$ with distance $\le (r/d) b$.
   If we choose $a \in \zo^b$ uniformly at random, by lemma~\ref{lem:int} we have probability $p=\poly(b)\exp(\frac{-s'^2}{2(1-r/d)})$ that both $x$ and $y$ have distance at most $t=b/2 - s'\sqrt{b/4}$ with $c$.
   By the union bound over pairs in $\zo^b$, if we sample $p^{-1}b\log2$ independent $a$s, we get constant probability that some $a$ works for every pair.
   We can verify that a set $A$ of such filters indeed works for every pair in time $4^b |A|$.
   By repeatedly sampling sets $A$ and verifying them, we get a working $A$ in expected $O(1)$ tries.\footnote{
   The randomness is not essential, and we could as well formulate a set cover instance and solve it using the greedy algorithm, which matches the probabilistic construction up to a log factor in size and time.}

   Next we define $\mathcal C$.
   We build a splitter, that is a set $\Pi$ of functions $\pi : [B] \to [B/b]$, such that for every set $I\subseteq[B]$ there is a $\pi\in\Pi$ such that $\lfloor|I|b/B\rfloor \le |\pi^{-1}(j)\cap I|\le \lceil |I|b/B\rceil$ for $j=1, \dots, B/b$.
   By the discussion after definition~\ref{defn:splitter}, such a set of size $\poly(B^{B/b})$ exists and can be constructed in time and space proportional to its size.
   Implicitly we can then define $\mathcal C$ by making one code word $c\in\zo^B$ for every $\pi\in\Pi$ and $1\le j_1, \dots, j_{B/b}\le |A|$, satisfying the requirement that $c_{\pi^{-1}(j_k)} = A_{j_k}$ for $k=1\dots B/b$.
   That is, for a given set of rows of $A$ and a split of $[B]$, we combine the rows into one row $c$ such that each row occupies a separate part of the split.
   Note that this is possible, since splitter has all partitions of equal size, $b$.
   The created family then has size $|\mathcal C| =|\Pi||A|^{B/b}
   = \poly(B^{B/b})\exp(\frac{-s^2}{2(1-r/d)})$ as promised.
   Since the only explicit work we had to do was finding $A$, we have property (3) of the lemma.

   We define the decoding function $C(x)\in\mathcal C$ for $x\in\zo^B$ with the following algorithm:
   For each $\pi\in\Pi$ compute the inner decodings $A_j = \{a\in A : \dist(x_{\pi^{-1}(j)}, a) \le b/2-s\sqrt{b}/2\}$ for $j=1,\dots,B/b$.
   Return the set of all concatenations in the product of the $A_j$'s: $C(x) = \{a_1 \| a_2 \| \dots \| a_{B/b} : a_1 \in A_1, \dots\}$.
   Computing the $A_j$'s take time $(B/b)|A|$, while computing and concatenating the product takes linear time in the size of the output.
   This shows property (1).

   Finally for property (2), consider a pair $x,y\in\zo^B$ of distance $\le r$.
   Let $I$ be the set of coordinates on which $x$ and $y$ differ.
   Then there is a function $\pi\in\Pi$ such that $x$ and $y$ differ in at most $|I|b/B = rb/B$ coordinates in each subset $\pi^{-1}(1),\dots,\pi^{-1}(B/b)\subseteq [B]$.
   Now for each pair of projected vectors $x_{\pi^{-1}(1)}, y_{\pi^{-1}(1)}, \dots$ (let's call them $x_1, y_1, \dots$) there is an $a_j\in A$ such that $\dist(a_j,x_j) \le b/2-s'\sqrt{b}/2$ and $\dist(a_j,y_j) \le b/2-s'\sqrt{b}/2$.
   This means that $x$ and $y$ must both have distance at most $(b/2-s'/2)B/b=B/2-s\sqrt{B}/2$ to that $c\in\mathcal C$ which has
   $c_{\pi^{-1}(j)} = a_{j}$ for $j=1\dots B/b$.
   By the same reasoning, this $c$ will be present in both $C(x)$ and $C(y)$, which proves the lemma.
\end{proof}

Returning to the problem of near neighbour search in $\zo^d$, it is clear from the $4^b\poly(B^{B/b})$ construction time of the above family, that it will not be efficient for dimension $B = (\log n)^{\omega(1)}$.
For this reason we will apply the following dimensionality reduction lemma:
\begin{lemma}\label{lem:embed}
   Given $d\ge cr>r\ge 1$ and $\epsilon,\delta>0$, define $B=27\epsilon^{-3}\log 1/\delta$ and $m= 3cr/\epsilon$ and assume $\delta^{-1}\ge m$,
   then there is a random set $F$ of at most $S=m/B$ functions $f : \zo^d \to \zo^{B}$ with the following properties for every $x,y\in\zo^d$:
   \begin{enumerate}
      \item With probability 1, there is at least one $f\in F$ st.:
         \begin{align*}
            \dist(f(x),f(y))\le \dist(x,y)/S.
         \end{align*}
      \item If $\dist(x,y)\ge cr$ then for every $f\in F$ with probability at least $1-\delta$:
         \begin{align*}
            \dist(f(x),f(y))\ge (1-\epsilon)cr/S.
         \end{align*}
   \end{enumerate}
\end{lemma}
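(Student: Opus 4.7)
The plan is to take $F$ to be a bucket-XOR hash followed by a partition of the output coordinates into equal-sized blocks. Specifically, I would sample a uniformly random $h : [d] \to [m]$, define $g_h : \zo^d \to \zo^m$ by $(g_h(x))_j = \bigoplus_{i : h(i) = j} x_i$, partition $[m]$ into $S = \lfloor m/B \rfloor$ consecutive blocks of $B$ coordinates, and let $f_i$ be the restriction of $g_h$ to the $i$-th block.

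Property 1 would then hold deterministically for any realization of $h$: each coordinate where $x$ and $y$ disagree flips the parity of exactly one bucket, so at most $\dist(x,y)$ output coordinates of $g_h$ end up differing. Splitting this sum across the $S$ blocks and averaging forces some $f_i$ to satisfy $\dist(f_i(x), f_i(y)) \le \dist(x,y)/S$.

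For Property 2, I would fix a block and a pair $x,y$ with $D := \dist(x,y) \ge cr$, and let $Y_j \in \zo$ indicate that bucket $j$ of the block has odd parity. Each $Y_j$ is $\mathrm{Bernoulli}(p)$ with $p = (1-(1-2/m)^D)/2$, which is monotone in $D$; using $1-e^{-x} \ge x - x^2/2$ at $x = 2cr/m = 2\epsilon/3$ gives $p \ge (\epsilon/3)(1-\epsilon/3)$, so $\mu := \E \sum_j Y_j \ge (1-\epsilon/3)\,cr/S$. The target $(1-\epsilon)\,cr/S$ then lies a factor of roughly $2\epsilon/3$ below $\mu$, reducing Property 2 to a lower-tail Chernoff bound $\Pr[\sum_j Y_j < (1-2\epsilon/3)\mu] \le \delta$.

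The main obstacle is that the $Y_j$'s share a common hash and are genuinely dependent, so a naive Chernoff does not apply. My plan is to Poissonize: replacing the $D$ hash throws by a $\mathrm{Poisson}(D)$-sized batch of i.i.d.\ throws makes the bucket occupancies independent Poissons and the $Y_j$'s independent Bernoullis, so the standard multiplicative Chernoff gives failure probability $\exp(-\Omega(\epsilon^2 \mu)) = \exp(-\Omega(\epsilon^3 B))$, which is at most $\delta^2$ under the choice $B = 27\epsilon^{-3}\log(1/\delta)$, leaving ample room to absorb the de-Poissonization overhead. As a backup, for $m > 4$ a short generating-function calculation with $(1-2/m+(z_1+z_2)/m)^D$ shows $\mathrm{Cov}(Y_j, Y_{j'}) = \tfrac14[(1-4/m)^D - (1-2/m)^{2D}] < 0$, and a Bernstein-type inequality for pairwise negatively correlated indicators (using $\mathrm{Var}(\sum_j Y_j) \le \mu$ together with $O(1)$ bounded differences per hash coordinate) would yield the same bound.
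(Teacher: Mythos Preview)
Your proposal is correct and matches the paper's proof almost exactly: the same bucket-XOR construction $g_h$ followed by partitioning into blocks, the same averaging argument for Property~1, and the same Poissonization-plus-Chernoff route for Property~2. The paper invokes Mitzenmacher--Upfal's Poisson approximation (their Theorem~5.7) to get the explicit $e\sqrt{m}$ de-Poissonization overhead, then shows $e\sqrt{m}\,\delta^2 \le \delta$ using the hypothesis $\delta^{-1}\ge m$; your ``ample room'' remark is precisely this step, and your backup negative-correlation argument is a valid alternative the paper does not pursue.
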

The idea is to randomly throw the $d$ coordinates in $m=3cr/\epsilon$ buckets, (xor-ing the value if more than two coordinates land in the same group.)
For two points with $\le cr$ differences, this has the effect of rarely colliding two coordinates at which the points disagree, thus preserving distances.
It further has the effect of changing the relative distances from arbitarily low $r/d$ to something around $\epsilon$, which allows partitioning the coordinates into groups of around $\epsilon^{-3}\log1/\delta$ coordinates using Chernoff bounds.
\begin{proof}
   To prove lemma~\ref{lem:embed} first notice that
   if $B\ge d$ we can use the identity function and we are done.
   If $B\ge m$, then we can duplicate the vector $\lceil m/B\rceil = O(\epsilon^{-2}\log 1/\delta)$ times.
   Also, by adjusting $\epsilon$ by a constant factor, we can assume that $B$ divides $m$.

   For the construction, pick a random function $h : [d] \to [m]$.
   Define $g : \zo^d \to \zo^m$ by `xor'ing the contents of each bucket, $g(x)_i = \bigoplus_{j\in h^{-1}(i)}x_j$,
   and let $f_{i}(x) = g(x)_{(iB,(i+1)B]}$ for $i=0\dots m/B$ be the set of functions in the lemma.
   We first show that this set has property (1) and then property (2).

   Observe that $g$ never increases distances, since for any $x,y\in\zo^d$ the distance
   \begin{align*}
      \dist(g(x),g(y))
      = \sum_{i=1}^m \left[\bigoplus_{j\in h^{-1}(i)} x_j \neq \bigoplus_{j\in h^{-1}(i)} y_j\right]
   \end{align*}
   is just $\sum_{i=1}^m\left(\sum_{j\in h^{-1}(i)} [x_j \neq y_j] \bmod 2\right)$ which is upper bounded by the number of coordinates at which $x$ and $y$ differ.
   By averaging, there must be one $f_i$ such that $\dist(f_i(x),f_i(x))\le \dist(g(x),g(y)) B/m \le \dist(x,y) /S$.

   For the second property, let $R=\dist(x,y)\ge cr$ and let $X_1, \dots, X_m$ be the random number of differences between $x$ and $y$ in each bucket under $h$.
   Let $Y_1, \dots, Y_m$ be iid. Poisson distributed variables with mean $\lambda=\E X_1 = R/m\ge\epsilon/3$.
   We use the the Poisson trick from~\cite{mitzenmacher2005probability} theorem 5.7:
   \begin{align*}
      \Pr[\sum_{i=1}^B (X_i \bmod 2) < x]
      \le e\sqrt{m} \Pr[\sum_{i=1}^B (Y_i \bmod 2) < x]
      .
   \end{align*}
   The probability $\Pr[Y \bmod 2\equiv1]$ that a Poisson random variable is odd is $(G(1)-G(-1))/2$ where $G(z)=\sum_i \Pr[Y=i]z^i = e^{\lambda(z-1)}$.
   This gives us the bound $\Pr[Y_i \bmod 2\equiv1]=(1-e^{-2\lambda})/2\ge(1-e^{-2\epsilon/3})/2\ge (1-\epsilon/3)\epsilon/3$.
   We can then bound the probability of an $f_i$ decreasing distances too much, using a Chernoff bound ($\Pr[Z\le x]\le\exp(-D[x/B\mid p]B)$):
   \begin{align*}
      &\Pr[\dist(f_i(x),f_i(y)) \le (1-\epsilon) cr/S]\\
      &\quad\le e\sqrt{m} \exp(-D[(1-\epsilon)\epsilon/3\mid (1-\epsilon/3)\epsilon/3]B)\\
      &\quad\le e\sqrt{m} \exp(-2\epsilon^3 B/27).
   \end{align*}
   Since $cr/S=cr B/m=B\epsilon/3$.
   Here $D[\alpha\mid\beta] = \alpha \log\frac{\alpha}{\beta} + (1-\alpha) \log\frac{1-\alpha}{1-\beta} \ge (\alpha-\beta)^2/(2\beta)$ is the Kullback–Leibler divergence.
   For our choice of $B$ the error probability is then $e\sqrt{m}\delta^2$ which is less than $\delta$ by our assumptions.
   This proves the lemma.
\end{proof}

Using lemma~\ref{lem:embed} we can make at most $3cr/\epsilon= O(d/\epsilon)$ data structures, as described below, and be guaranteed that in one of them, we will find a near neighbour at distance $r' = r/S=\epsilon/(3c)B$.
In each data structure we will have to reduce the distance $cr'$, at which we expect far points to appear, to $cr'(1-\epsilon)$.
This ensures we see at most a constant number of false positives in each data structure, which we can easily filter out.
For $\epsilon=o(1)$ this change be swallowed by the approximation factor $c$, and won't significantly impair our performance.

When using the filter family of lemma~\ref{lem:hamcode} for LSF, the time usage for queries and inserting points is dominated by two parts:
1) The complexity of evaluating $C(x)$, and 2) The expected number of points at distance larger than $cr'(1-\epsilon)$ that falls in the same filter as $x$.

By randomly permuting and flipping the coordinates, we can assume that $x$ is a random point in $\zo^d$.
The expected time to decode $C(x)$ is then
\begin{align*}
   &\E|C(x)| + \poly(B^{B/b}, e^{s^2b/B})\\
   &= |\mathcal C|\Pr_x[0\in C(x)] + \poly(B^{B/b}, e^{s^2b/B})\\
   &\le \poly(B^{B/b}, e^{s^2b/B})\exp\left(\tfrac{s^2}{2(1-r'/B)}-\tfrac{s^2}{2}\right)
   .
\end{align*}

For estimating collisions with far points, we can similarly assume that $x$ and $y$ are random points in $\zo^d$ with fixed distance $cr'(1-\epsilon)$:
\begin{align*}
   &\E|\{y \in P : C(x)\cap C(y)\neq \emptyset\}|\\
   &\le n\, |\mathcal C|\, \Pr_{x,y}[0\in C(x), 0\in C(y)]\\
   &\le B^{O(B/b)} \exp\left(\tfrac{s^2}{2(1-r'/B)}-\tfrac{s^2}{2(1-c(1-\epsilon)r'/B)}+\log n\right)\\
   &= B^{O(B/b)} \exp\left(\tfrac{s^2}2(\tfrac1{1-r'/B}-\tfrac{1}{1-cr'/B}+O(\epsilon))+\log n\right)
   .
\end{align*}
Finally we should recall that constructing the data structures takes time $4^b \poly(e^{s^2b/B})$ plus $n$ inserts.

We now choose the parameters:
\begin{align*}
   s^2/2 &= \tfrac{1-cr'/B}{cr'/B}\log n,
         &B &= 27\epsilon^{-3}\log n,\\\
   b &= \log_4 n,
     &\epsilon &= (\log n)^{-1/4}.
\end{align*}
This makes the code construction time $n^{1+o(1)}$ while evaluating $C(x)$ and looking at far points takes expected time at most $n^{1/c+\tilde O(\log n)^{-1/4}}$.
To use lemma~\ref{lem:hamcode} we have to check that $s^2=O(B/\sqrt{b})=O((\log n)^{5/4})$, but $s^2/2 = \frac{1-\epsilon/3}{\epsilon/3}\log n = (\log n)^{5/4}(1-o(1))$ so everything works out.
This shows theorem~\ref{thm:ham}.

To get the result of corollary~\ref{cor:ham}, we just need to substitute the dimensionality reduction lemma~\ref{lem:embed} for a simple partitioning approach. (Lemma \ref{lem:partitioning} below.)
The idea is that of~\cite{pagh2016locality} which is to randomly partition the $d$ coordinates in $B$ parts and run the algorithm on those.
The important point is that in this case $r'/B = r/d$, so the relative distance is not decreased.
We choose parameters
\begin{align*}
   s^2/2 &= \tfrac{1-cr/d}{cr/d}\log n
         &B &= O(\epsilon^{-2}(cr/d)^{-1}\log n),\\
   b &= \log_4 n,
     &\epsilon &= (\log n)^{-1/3}.
\end{align*}
This again makes this makes the code construction time $n^{1+o(1)}$ while evaluating $C(x)$ and looking at far points takes time $n^{\tfrac{1-c\delta}{c(1-\delta)}+\tilde O(\log n)^{-1/3}d/r}$ as in the corollary.
Again we need need to check that $s^2=O(B/\sqrt b)=O((\log n)^{7/6})$.
This works as long as $r/d=\Omega((\log n)^{-1/6})$, which is the assumption of the corollary.

\begin{lemma}\label{lem:partitioning}
   For any $d \ge r \ge 1$ and $\epsilon>0$ there is a set $F$ of $d/B$ functions, $f : \zo^d \to \zo^B$, where $B=2\epsilon^{-2}d/(cr)\log n$, such that:
   \begin{enumerate}
      \item With probability 1, there is at least one $f\in F$ st.:
         \begin{align*}
            \dist(f(x),f(y))\le \dist(x,y)\,B/d.
         \end{align*}
      \item If $\dist(x,y)\ge cr$ then for every $f\in F$ with probability at least $1-1/n$:
         \begin{align*}
            \dist(f(x),f(y))\ge (1-\epsilon)cr\,B/d.
         \end{align*}
   \end{enumerate}
\end{lemma}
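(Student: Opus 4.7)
The plan is to build $F$ by a random partition of the coordinates: pick a uniformly random permutation $\sigma$ of $[d]$ (WLOG assume $B \mid d$, since we can pad with dummy coordinates at negligible cost) and, for $i = 1, \dots, d/B$, define $f_i(x) = (x_{\sigma((i-1)B+1)}, x_{\sigma((i-1)B+2)}, \dots, x_{\sigma(iB)})$. This gives exactly the $d/B$ functions demanded by the lemma.

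Property (1) is then immediate from the additivity of Hamming distance across a partition of the coordinates. For any $x, y \in \zo^d$,
\begin{align*}
   \sum_{i=1}^{d/B} \dist(f_i(x), f_i(y)) = \dist(x,y),
\end{align*}
so some $f_i$ achieves distance at most the average $\dist(x,y)\, B/d$. This part needs no randomness argument at all, which is why the bound holds with probability 1.

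For property (2), fix any $f_i$ and any pair $x, y$ with $R := \dist(x,y) \ge cr$. The random quantity $\dist(f_i(x), f_i(y))$ is the number of ``differing'' coordinates (out of $R$) that fall into the $i$-th block when $B$ coordinates are drawn without replacement from $[d]$; that is, it is hypergeometrically distributed with mean $\mu = RB/d \ge crB/d$. A standard hypergeometric Chernoff lower-tail bound (equivalently obtained via the Poisson trick as in the proof of lemma~\ref{lem:embed}, or via the classical reduction of Hoeffding from sampling without replacement to sampling with replacement) yields
\begin{align*}
   \Pr\!\left[\dist(f_i(x), f_i(y)) \le (1-\epsilon)\, crB/d\right]
   \le \exp\!\left(-\tfrac{\epsilon^2}{2}\, crB/d\right).
\end{align*}
Plugging in $B = 2\epsilon^{-2}\, (d/(cr)) \log n$ makes the exponent equal to $-\log n$, giving failure probability at most $1/n$, as required.

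The proof is largely routine once the construction is fixed; the only minor subtlety is invoking a concentration bound that handles sampling without replacement (rather than iid Bernoullis). I would expect the ``hard'' step to be just bookkeeping: making sure the bound is stated for $crB/d$ rather than $RB/d$ (which is legitimate because $R \ge cr$ only shifts the mean upward, so the lower tail at $(1-\epsilon)crB/d$ is no larger), and handling the divisibility of $B$ into $d$ by padding.
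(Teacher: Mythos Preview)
Your proposal is correct and follows essentially the same approach as the paper: a random permutation, partition into $d/B$ blocks, averaging for property (1), and a Chernoff/Hoeffding tail bound (for sampling without replacement) for property (2). The paper's own proof is in fact terser than yours; your added remarks on the hypergeometric nature of the count, the monotonicity argument allowing $crB/d$ in place of $RB/d$, and the padding for divisibility are all sound elaborations of what the paper leaves implicit.
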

\begin{proof}
   Fix a random permutation.
   Given $x\in\zo^d$, shuffle the coordinates using the permutation.
   Let $f_1(x)$ be the first $B$ coordinates of $x$, $f_2(x)$ the next $B$ coordinates and so forth.
   For any $y\in\zo^d$, after shuffling, the expected number of differences in some block of $B$ coordinates is $\dist(x,y)B/d$.
   Then the first property follows because $\sum_i\dist(f_i(x),f_i(y))=\dist(x,y)$ so not all distances can be below the expectation.
   The second property follows from the Chernoff/Hoeffding bound~\cite{hoeffding1963probability}.
\end{proof}

\section{Set Similarity Data Structure}\label{sec:similarity}

We explore the generality of our methods, by making a Las Vegas version of another very common LSH data structure.
Recall the theorem we are trying to prove, from the introduction:

\begin{theorem*}
   Given a set $P$ of $n$ subsets of $[d]$, define
   the Braun-Blanquet similarity
   $\simi(x,y) = |x\cap y|/\max(|x|,|y|)$ on the elements of $P$.
   For every choice of $0 < b_2 < b_1 < 1$ there is a data structure on $P$ that supports the following query operation:

   On input $q \subseteq [d]$, for which there exists a set $x \in P$ with $\simi(x,q) \ge b_1$, return an $x' \in P$ with $\simi(x', q) > b_2$.
   The data structure uses expected time
   $dn^\rho$ per query,
   can be constructed in expected time $d n^{1+\rho}$,
   and takes expected space $n^{1+\rho} + dn$ where
   $\rho = \frac{\log1/b_1}{\log1/b_2} +\, \scriptstyle \hat O(1/\sqrt{\log n})$.
\end{theorem*}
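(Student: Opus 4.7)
The plan is to follow the same LSF template used for Hamming space: build a filter family $\mathcal F \subseteq \mathcal P([d])$ where a set $x$ is assigned the filters $F(x) = \{f \in \mathcal F : f \subseteq x\}$, guarantee that every pair with $\simi(x,y) \ge b_1$ shares at least one filter, and bound the expected number of dissimilar collisions by $n^{\rho + o(1)}$. Each filter is naturally a $k$-subset of the universe for a suitable $k$; two sets can share a $k$-subset only if their intersection has at least $k$ elements, so the covering requirement reduces to a Turán-type question: build a family $\mathcal T$ of $k$-subsets such that every large-enough intersection contains at least one member. Choosing $k$ so that $(b_2/b_1)^k \approx n^{\rho-1}$ and $k = \Theta(\log n)$ forces the balance that produces the target exponent $\rho = \log(1/b_1)/\log(1/b_2)$.

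First I would normalize to sets of equal weight $w = \Theta(\log n)$ by combining a random partition of $[d]$ with deterministic weight bucketing: for every close pair, at least one block--bucket choice preserves both cardinalities and their intersection up to a $1 \pm o(1)$ factor, while typical dissimilar pairs are further diluted by Chernoff. Because we only need a one-sided guarantee we can take the union of the constructions produced by each reduction at the cost of an $n^{o(1)}$ blowup. This leaves a filter-design problem over universe $[d]$ for $w$-subsets.

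The heart of the argument is the Turán system construction, which I would build in two stages mirroring lemma~\ref{lem:hamcode}. For a small universe $[u]$ with $u = \poly(w)$ I would produce a base family $\mathcal T_0$ by random sampling plus deterministic verification (or equivalently greedy set cover), obtaining $|\mathcal T_0| \lesssim \binom{u}{r}/\binom{k}{r} \cdot \poly(u)$ for a small ``core'' size $r$. I would scale $u$ up to $w$ using a splitter $\Pi$ from $[w] \to [w/u]$ and taking products of base filters across blocks, exactly as in the Hamming proof; this adds only a $w^{O(w/u)} = n^{o(1)}$ factor. Finally, to lift from universe $[w]$ to the actual universe $[d]$ I would apply the perfect-hashing trick from the intuition section: for each $R$ in the scaled-up family and each $\pi$ in a $(d, w)$-perfect hash family, include the pullback $\pi^{-1}(R)$ as a filter. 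Perfect hashing guarantees that every $w$-subset $K$ of $[d]$ is mapped bijectively onto $[w]$ by at least one $\pi$, so $\pi(K)$ is covered by the scaled-up family and the corresponding pullback lies in $K$; the perfect hash family costs only $e^{w(1+o(1))}\log d = n^{o(1)}$.

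Putting the pieces together, the filter family has size $n^{\rho + o(1)}$, evaluating $F(x)$ costs $n^{o(1)}$ per returned filter thanks to the product structure, and the expected number of dissimilar collisions per query is $n^{\rho + o(1)}$, matching the theorem. The main obstacle is keeping every multiplicative overhead --- the splitter factor $w^{O(w/u)}$, the perfect-hash blowup $e^{O(w)}$, the binomial-ratio slack analogous to lemma~\ref{lem:int}, and the rounding losses on $k$, $r$, $u$, $w$ --- all inside the target $\hat O(1/\sqrt{\log n})$ term in the exponent. This tension is what pins down the choices $w = \Theta(\log n)$ and $u = \poly(w)$, and it is where the careful binomial estimates of the Hamming proof need to be re-proved in the set-similarity regime.
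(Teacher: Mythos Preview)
Your outline has the right three ingredients (brute force, splitters, perfect hashing), but the perfect-hashing step as you describe it does not work, and this is the crux of the construction.

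You propose a $(d,w)$-perfect hash family with codomain $[w]$, i.e.\ functions $[d]\to[w]$ that are injective on every $w$-subset. Such a family must have size $e^{(1-o(1))w}$: a random function is injective on a fixed $w$-set with probability $w!/w^w\approx e^{-w}$, and the Fredman--Koml\'os lower bound shows this is essentially tight. With $w=\Theta(\log n)$ this is $n^{\Theta(1)}$, not $n^{o(1)}$ as you assert, so the overhead eats the entire $\rho$ budget. You cannot take $w=o(\log n)$ either, since the filter size $r=\log n/\log(1/b_2)$ must satisfy $r\le b_1 w$.

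The paper sidesteps this by hashing into a domain \emph{quadratic} in the relevant set size: functions $[n/a]\to[(k/a)^2]$, where a perfect hash family of size $\poly(k/a)\log d$ exists. The price is that the inner Tur\'an system must now live on a universe of size $(k/a)^2$ rather than $k/a$; that is precisely what the splitter stage is for, tensoring a brute-forced $(n',k',r')$-system with $n',k',r'\approx\poly(r)$ up to a $((k/a)^2,k/a,r)$-system. A fourth step (random partitioning of the universe into $a$ blocks) then bridges the remaining gap from $k/a\approx r^{3/2}/\log r$ to the actual $k=b_1 w$, and this step is where the expected-size bound on $T(S)$ is proved via a careful hypergeometric/Vandermonde calculation.

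Relatedly, the paper does \emph{not} reduce $w$ to $\Theta(\log n)$; as the intuition section already warns, no data-independent reduction can shrink the universe for this problem without violating lower bounds. The Tur\'an lemma is instead proved for arbitrary universe $n=d$ and arbitrary $k=b_1 w$, with only the block size $r$ being logarithmic. Your dimensionality-reduction step is therefore both unjustified (you do not explain how to preserve the one-sided guarantee on intersections) and unhelpful (it does not make the perfect-hash cost affordable).
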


By known reductions~\cite{tobias2016} we can focus on the case where all sets have the same weight, $w$, which is known to the algorithm.
This works by grouping sets in data structures with sets of similar weight and uses no randomization.
The price is only a $(\log n)^{O(1)}$ factor in time and space, which is dominated by the $n^{\hat O(1/\sqrt{\log n})}$ factor in the theorem.

When two sets have equal weight $w$, being $b$-close in Braun-Blanquet similarity coresponds exactly to having an intersection of size $b w$.
Hence for the data structure, when trying to solve the $(b_1,b_2)$-approximate similarity search problem, we may assume that the intersections between the query and the `close' sets we are interested in are at least $b_1 w$, while the intersections between the query and the `far' sets we are not interested in are at most $b_2 w$.

The structure of the algorithm follows the LSF framework as in the previous section.
A good filter family for set similarity turns out to be the $r$-element blocks of a Turán system.
This choice is inspired by~\cite{tobias2016} who sampled subsets with replacement.
\begin{definition}[\cite{turan1961research,colbourn2006handbook}]
A Turán $(n, k, r)$-system is a collection of $r$-element subsets, called `blocks', of an $n$ element set $X$ such that every $k$ element subset of $X$ contains at least one of the blocks.
\end{definition}

Turán systems are well studied on their own, however all known general constructions are either only existential or of suboptimal size.
The following lemma provides the first construction to tick both boxes, and with the added benefit of being efficiently decodable.

An interesting difference between this section and the last, is that we don't know how to do a dimensionality reduction like we did for hamming distance.
Instead we are (luckily) able to make an efficiently decodable filter family even for very large dimensional data points.

\begin{lemma}\label{lem:turan}
   For every $n, k, r$, where $n > k > r^{3/2}$, there exists a Turán $(n,k,r)$-system, $\mathcal T$, of
   size $|\mathcal T|\le (n/k)^r \, e^\chi$
   where $\chi=O(\sqrt{r}\log r+\log k+\log\log n)$
   with the following properties:
   \begin{enumerate}
      \item Correctness: For every set $K\subseteq[n]$ of size at least $k$,
         there is at least one block $R\in\mathcal T$ such that $R\subseteq K$.
      \item Efficient decoding: Given a set $S\subseteq [n]$, we can find all the blocks it contains $T(S) = \{R\in \mathcal T : R \subseteq S\}$ in time $|S||T(S)| + e^\chi$.
         Furthermore, $T(S)$ has expected size $\le (|S|/k)^r e^\chi$.
      \item Efficient construction: The system is constructible, implicitly, in $e^{r(1+o(1))}$ time and space.
   \end{enumerate}
\end{lemma}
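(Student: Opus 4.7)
The plan is to construct $\mathcal T$ in two layers, mirroring the structure used in Lemma~\ref{lem:hamcode}: a small base Turán system on a universe of size $\poly(r)$, found by exhaustive search, and a splitter/perfect-hashing amplification that lifts it to $[n]$. The construction-time budget $e^{r(1+o(1))}$ restricts the base to a $\poly(r)$-sized universe, while the size budget $(n/k)^r e^\chi$ with $\chi = O(\sqrt r\log r + \log k + \log\log n)$ determines how much slack each step may contribute.

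For the base I would build a Turán $(b,a,r)$-system $\mathcal T_0$ on $[b]$ with $b,a=\poly(r)$ and $b/a$ only slightly above $1$ by the probabilistic method: uniformly sampling $r$-subsets of $[b]$, the expected number of uncovered $a$-subsets drops below $1$ after roughly $\binom{b}{a}(a/b)^r \log\binom{b}{a}$ samples, after which a brute-force check in time $\binom{b}{a}\le e^{O(r)}$ certifies the covering property; alternatively the greedy set-cover algorithm gives the same bound deterministically up to a $\log$ factor. Tuning $b/a = 1+\Theta(\log r/\sqrt r)$ gives $|\mathcal T_0| \le e^{O(\sqrt r \log r)}$, which is where the $\sqrt r\log r$ contribution to $\chi$ comes from.

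To lift from $[b]$ to $[n]$, I would use a family $\mathcal H$ of hash functions $h:[n]\to[b]$ acting as a splitter (Definition~\ref{defn:splitter}): for every $k$-subset $K\subseteq[n]$ some $h\in\mathcal H$ spreads $K$ evenly across $[b]$ so that $h(K)$ contains an $a$-subset. Standard constructions from \cite{naor1995splitters, alon2006algorithmic} yield $|\mathcal H|=O(k\log n)$, contributing the $\log k+\log\log n$ terms to $\chi$. The blocks of $\mathcal T$ are then all sets of the form $\{x_i\}_{i\in R}$ with $h\in\mathcal H$, $R\in\mathcal T_0$, and $x_i\in h^{-1}(i)$; correctness is immediate because the splitter places an $a$-subset of $K$ inside $h(K)$, the base system covers it with some $R$, and selecting $x_i\in h^{-1}(i)\cap K$ pulls $R$ back to an $r$-block inside $K$. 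For decoding, iterate over $h\in\mathcal H$, compute $h(S)$, list the $R\in\mathcal T_0$ with $R\subseteq h(S)$ at $e^\chi$ overhead, and for each such $R$ enumerate its lifts into $S$ at $O(|S|)$ per output block, matching the claimed $|S||T(S)|+e^\chi$. An averaging argument bounds $\mathbb E|T(S)|$ by $|\mathcal H||\mathcal T_0|(|S|/b)^r\le (|S|/k)^r e^\chi$; specializing $S=[n]$ bounds $|\mathcal T|$.

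The hard part is tuning $(b,a)$ so that the three slack contributions --- base-system size, splitter size, and the mismatch between $b$ and $k$ --- combine exactly into the advertised $\chi$. In particular, matching the $\sqrt r\log r$ term without blowing up either the base universe or the brute-force construction budget may require either a sharper analysis of the probabilistic/greedy cover or an auxiliary tensoring layer that composes length-$\sqrt r$ base blocks into length-$r$ ones, much as Lemma~\ref{lem:hamcode} builds its length-$B$ code from length-$b$ pieces. The assumption $k>r^{3/2}$ is what makes the splitter step affordable, since it leaves enough room between $k$ and the Turán block size $r$ for a $\poly(r)$-range hash family to still spread every $k$-subset widely enough for the base system to apply.
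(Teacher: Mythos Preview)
Your two-layer plan has a genuine gap in the size accounting. If the base system lives on a universe of size $b=\poly(r)$ and you lift via hashes $h:[n]\to[b]$, then each base block $R\in\mathcal T_0$ produces $\prod_{i\in R}|h^{-1}(i)|\approx (n/b)^r$ lifted blocks, so $|\mathcal T|\approx |\mathcal H|\,|\mathcal T_0|\,(n/b)^r$. Comparing to the target $(n/k)^r e^\chi$ leaves a factor $(k/b)^r$ that must be absorbed into $e^\chi$. But $k$ is not bounded by any polynomial in $r$ (only $k>r^{3/2}$ is assumed), so $(k/b)^r$ can be arbitrarily large; your claimed inequality $|\mathcal H||\mathcal T_0|(|S|/b)^r\le (|S|/k)^r e^\chi$ simply fails once $k$ exceeds $\poly(r)$. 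The same issue shows up in your ``splitter'' family: you cannot perfectly hash $k$-sets into $[b]$ when $b<k$, and the weaker property you state (that $h(K)$ merely covers $a$ buckets) does nothing to control the preimage sizes that drive the blow-up.

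The paper resolves this with two extra layers that your outline omits. First, a random partition of $[n]$ into $a=\Theta(k r^{-3/2}\log r)$ parts reduces both $n$ and $k$ by the factor $a$, bringing $k/a$ down to $\poly(r)$; this is the step that actually exploits $k>r^{3/2}$. Second, the hash lift is a \emph{perfect} hash family $h:[n/a]\to[(k/a)^2]$ (size $(k/a)^4\log(n/a)$, contributing the $\log k+\log\log n$ terms), where the target range is deliberately $(k/a)^2$ so that the intermediate Tur\'an system has ratio $n'/k'=k/a$ and the lift factor $(n/a)/(k/a)^2$ times $(k/a)$ cancels to exactly $(n/k)$. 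Your ``auxiliary tensoring layer'' intuition is correct and matches the paper's step from block size $\sqrt r$ to $r$, but the partitioning-then-perfect-hashing combination is the missing idea needed to keep $k$ and $b$ commensurate.
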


Notes:
A simple volume lower bound shows that an $(n,k,r)$-system must have at least ${n\choose r}/{k\choose r}\ge (n/k)^r$ blocks, making our construction optimal up the factor $e^\chi$.
Using the sharper bound
${n\choose r}/{k\choose r}\approx (n/k)^r \exp(\frac{r^2}{2k})$
from lemma~\ref{lem:ratio}, we get that the factor
$\exp(\Omega(\sqrt{r}))$ is indeed tight for $k = O(r^{3/2})$.

The size of the system is in `expectation', which is sufficient for our purposes, but is in fact fairly easy to fix.
On the other hand, the `expectation' in the size of sets $T(S)$ seems harder to get rid of, which is the reason why the data strcuture is Las Vegas and not deterministic.

\subsection{The algorithm}

We continue to describe the algorithm and proving theorem~\ref{thm:sim} using the lemma.
The proof of the lemma is at the end and will be the main difficulty of the section.

As discussed, by the reduction of~\cite{tobias2016} we can assume that all sets have weight $w$, intersections with close sets have size $\ge b_1 w$ and intersections with far sets have size $\le b_2 w$.
The data structure follows the LSF scheme as in the previous section.
For filters we use a Turán $(d, b_1w, \frac{\log n}{\log1/b_2})$ design, constructed by lemma~\ref{lem:turan}.
Note that if $b_1w < (\frac{\log n}{\log1/b_2})^{3/2}$ ($k<r^{3/2}$ in the terms of the lemma), we can simply concatenate the vectors with themselves $O(\log n)$ times.
If $b_1w \le \frac{\log n}{\log1/b_2}$ we can simply use all the $\binom{d}{b_1w}$ sets of size $b_1w$ as a Turán $(d,b_1w,b_1w)$ system and get a fast deterministic data structure.

As in the previous section, given a dataset $P$ of $n$ subsets of $[d]$, we build the data structure by decoding each set in our filter system $\mathcal T$.
We store pointers from each set $R\in\mathcal T$ to the elements of $P$ in which they are a contained.
By the lemma, this takes time
$n(w(w/k)^r+1)e^\chi = wn(1/b_1)^{\frac{\log n}{\log1/b_2}}e^\chi \le dn^\rho$, while expected space usage is $n(w/k)^re^\chi + e^{r(1+o(1))} + dn = n^\rho + dn$ as in the theorem.
Building $\mathcal T$ takes time $e^{r(1+o(1))} = n^{(1+o(1))/\log1/b_2} = n^{1+o(1)}$.

Queries are likewise done by decoding the query set $x$ in $\mathcal T$ and inspecting each point $y\in P$ for which there exists $R\in\mathcal T$ with $R\subseteq y$, until a point $y'$ with $\simi(x,y') > b_2$ is found.
Let's call this the candidate set of $x$.
The expected number of false positive points in the candidates is thus
\begin{align*}
   \sum_{y\in P}\E[|\{R \in \mathcal T : R\subseteq x\cap y\}|]
   =\sum_{y\in P}\E[|T(x\cap y)|]
   \le n (b_2w/(b_1w))^{\frac{\log n}{\log1/b_2}} e^\chi
   = n^{\rho}.
\end{align*}
Computing the actual similarity takes time $w$, so this step takes time at most $wn^{\rho}\le dn^{\rho}$ .
We also have to pay for actually decoding $T(x)$, but that takes time
$w (w/(b_1 w))^{\frac{\log n}{\log1/b_2}}e^\chi + e^\chi \le d n^{\rho}$ as well.

Finally, to see that the algorithm is correct, if $\simi(x,y) \ge b_1$ we have $|x\cap y|\ge b_1 w$, and so by the Turán property of $\mathcal T$ there is $R\in\mathcal T$ such that $R\subseteq x\cap y$ which implies $R\subseteq x$ and $R\subseteq y$.
This shows that there will always be at least one true high-similarity set in the candidates, which proves theorem~\ref{thm:sim}.

\subsection{The proof of lemma~\ref{lem:turan}}

\begin{proof}
   We first prove the lemma in four parts, starting with a small design and making it larger step by step.
   To more easily describe the small designs,
   define $a = kr^{-3/2}\log(r^{3/2})$
   and $b = \sqrt{r}$.
   The steps are then
   \begin{enumerate}
      \item Making a $(k^2/(a^2b), k/(ab), r/b)$ using brute force methods.
      \item Use splitters to scale it to $((k/a)^2, k/a, r)$.
      \item Use perfect hashing to make it an $(n/a, k/a, r)$ design.
      \item Use partitioning to finally make an $(n, k, r)$ design.
   \end{enumerate}
   We first prove the lemma without worrying about the above values being intergers.
   Then we'll show that each value is close enough to some integer that we can hide any excess due to approximation in the loss term.

   The four steps can also be seen as proofs of the four inequalities:
   \begin{align*}
      T(n, k, r) & \le \tbinom{n}{r}/\tbinom{k}{r}\, (1 + \log\tbinom{n}{k}),\\
      T(c n, c k, c r) & \le \tbinom{cn}{c}\, T(n,k,r)^c,\\
      T(c k^2, k, r) & \le (k^4 \log ck^2) c^r\, T(k^2, k, r),\\
      T(c n, c k, r) & \le c\, T(n, k, r).
   \end{align*}
   where the $c$ are arbitrary integer constants $>0$.

   \paragraph{1.}
   For convenience, define
   $n' = k^2/(a^2b)$, $k' = k/(ab)$, $r'=r/b$ and assume they are all intergers.
   Probabilistically we can build a Turán $(n',k',r')$-system, $\mathcal T^{(n')}$, by sampling
   \begin{align*}
      \ell
      = \tbinom{n'}{r'}\big/\tbinom{k'}{r'}(1+\log\tbinom {n'}{k'})
      \le (n'/k')^{r'} e^{r'^2/k'} (1+k'\log(en'/k'))
      = (n'/k')^{r'} r^{5/2}(1+o(1))
   \end{align*}
   independent size $r'$-sets.
   (Here we used the bound on $\tbinom{n'}{r'}/\tbinom{k'}{r'}$ from lemma~\ref{lem:ratio} in the appendix.)
   For any size $k'$ subset, $K$, the probability that it contains none of the $r'$-sets is
      $\left(1-\tbinom {k'}{r'}\big/\tbinom {n'}{r'}\right)^\ell
      \le e^{-1}/\tbinom {n'}{k'}$.
      Hence by the union bound over all $\tbinom{n'}{k'}$ K-sets, there is constant probability that they all contain an $r'$-set, making our $\mathcal T^{(n')}$ a valid system.

   We can verify the correctness of a sampled system, naiively, by trying iterating over all $\tbinom{n'}{k'}$ K-sets, and for each one check if any of the R-sets is contained.
   This takes time bounded by
   \begin{align*}
      \tbinom{n'}{k'}\ell
      &\le
      (en'/k')^{k'}(n'/k')^{r'}r^{5/2}(1+o(1))
      \\&=
      \left(\tfrac{e r^{3/2}}{\log(r^{3/2})}\right)^\frac{r}{\log(r^{3/2})}
      \left(\tfrac{r^2}{\log(r^{3/2})}\right)^{\sqrt r}r^{O(1)}
      \\&= e^{r+O(r/\log r)}
   \end{align*}
   as in the preprocessing time promised by the lemma.
   Since we had constant success probability, we can repeat the above steps an expected constant number of times to get a correct system.

   Notice that the system has a simple decoding algorithm of brute-force iterating through all $\ell$ sets in $\mathcal T^{(n')}$.

   \paragraph{2.}

   To scale up the system, we proceed as in the previous section, by taking a splitter, $\Pi$, that is a set of functions $\pi : [bn'] \to [b]$ such that for any set $I\subseteq[bn']$ there is a $\pi\in\Pi$ such that
   \begin{align*}
       \lfloor |I|/b\rfloor \le
      |\pi^{-1}(j)\cap I| \le \lceil |I|/b \rceil \enspace \text{ for }j=1, \dots, b.
   \end{align*}
   In other words, each $\pi$ partitions $[bn']$ in $b$ sets $[bn'] = \pi^{-1}(1)\cup\dots\pi^{-1}(b)$ and for any subset $I\subseteq[bn']$ there is a partition which splits it as close to evenly as possible.
   We discuss the constructions of such sets of functions in the appendix.


   For each $\pi\in\Pi$, and distinct $i_1, \dots, i_{b}\in[|\mathcal T^{(n')}|]$, we make a $br'$-set, $R\subseteq[bn']$, which we think of as an indicator vector $\in\zo^{bn'}$, such that $R_{\pi^{-1}(j)} = \mathcal T^{(n')}_{i_j}$ for $j=1\dots b$.
   That is, the new block, restricted to $\pi^{-1}(1), \pi^{-1}(2), \dots$, will be equal to the $i_1$th, $i_2$th, $\dots$ blocks of $\mathcal T^{(n')}$.
   Another way to think of this is that we take the $i_1$th, $i_2$th, $\dots$ blocks of $\mathcal T^{(n')}$ considered as binary vectors in $\{0,1\}^{n'}$ and combine them to a $bn'$ block `spreading' them using $\pi$.

   The new blocks taken together forms a family $\mathcal T^{(bn')}$ of size
   \begin{align*}
      |\mathcal T^{(bn')}| =|\Pi||\mathcal T^{(n')}|^{b}
      = \tbinom{bn'}{b}[(n'/k')^{r'}r^{O(1)}]^b
      \le (en')^b (n'/k')^{br'} r^{O(b)}
      = (n'/k')^{br'} r^{O(b)},
   \end{align*}
   where we used only the trivial bound $\binom{n}{k}\le(en/k)^k$ and the fact that $n' = r^{O(1)}$.

   \vspace{0.5em}

   We now show correctness of $\mathcal T^{(bn')}$.
   For this, consider any $bk'$-set $K\subseteq[bn']$.
   We need to show that there is some $br'$-set $R\in\mathcal T^{(bn')}$ such that $R\subseteq K$.
   By construction of $\Pi$, there is some $\pi\in\Pi$ such that $|\pi^{-1}(j)\cap K| = k'$ for all $j=1,\dots,b$.
   Considering $K$ as an indicator vector, we look up $K_{\pi^{-1}(1)}, \dots, K_{\pi^{-1}(b)}$ in $\mathcal T^{(n')}$, which gives us $b$ disjoint $r'$-sets, $R'_1, \dots, R'_{b}$.
   By construction of $\mathcal T^{(bn')}$ there is a single $R\in\mathcal T^{(bn')}$ such that $R_{\pi^{-1}(j)} = R'_j$ for all $j$.
   Now, since $R'_j \subseteq K_{\pi^{-1}(j)}$ for all $j$, we get $R\subseteq K$ and so we have proven $\mathcal T^{(bn')}$ to be a correct $(bn', bk', br')$-system.

   Decoding $\mathcal T$ is fast, since we only have to do $|\Pi| \cdot b$ lookups in (enumerations of) $\mathcal T^{(n')}$.
   When evaluating $T^{(bn')}(K)$ we make sure we return every $br'$-set in $K$.
   Hence we return the entire ``product'' of unions:
   \begin{align*}
      T^{(bn')}(K) = \bigcup_{\pi\in\Pi}\{
            R_1\cup\dots\cup R_{b}
      : R_1\in T^{(n')}(K_{\pi^{-1}(1)}), R_2\in\dots\}
      .
   \end{align*}
  In total this takes time $|T^{(bn')}(K)|$ for the union product plus an overhead of
  $ |\Pi|b|\mathcal T^{(n')}| \le (en')^b r^{O(r'+b)} = r^{O(\sqrt{r})}  $ for the individual decodings.

   \paragraph{3.}
   Next we convert our $((k/a)^2, k/a, r)$ design, $\mathcal T^{(bn')}$ (note that $bn' = (k/a)^2$), to a $(n/a, k/a, r)$ design, call it $\mathcal T^{(n/a)}$.

   Let $\mathcal H$ be a perfect hash family of functions $h : [n/a] \to [(k/a)^2]$,
   such that for every $k/a$-set, $S \subseteq [n/a]$,
   there is an $h\in\mathcal  H$ such that $|h(S)| = k/a$.
   That is, no element of $S$ gets mapped to the same value.
   By lemma 3 in~\cite{alon2006algorithmic}, we can efficiently construct such a family of $(k/a)^4\log(n/a)$ functions.

   We will first describe the decoding function $T^{(n/a)} : \mathcal P([n/a]) \to \binom{[n/a]}{k/a}$, and then let $\mathcal T^{(n/a)} = T^{(n/a)}([n/a])$.
   For any set $S\subseteq[n/a]$ to be decoded,
   for each $h\in\mathcal  H$,
   we evaluate $T^{(bn')}(h(S))$ to get all $r$-sets $R\in \mathcal T^{(bn')}$ where  $R\subseteq h(S)$.
   For each such set, we return each set in
   \begin{align*}
      (h^{-1}(R_1)\cap S)
      \times (h^{-1}(R_2)\cap S)
      \times \dots \times (h^{-1}(R_r)\cap S),
   \end{align*}
   where $R_i$ is the $i$th element of $R$ when considered as a $[bn']^r$ vector.

   This takes time equal to the size of the above product (the number of results, $|T(S)|$) plus an overhead of $|\mathcal H|$ times the time to decode in $\mathcal T^{(bn')}$
   which is $|\mathcal H| r^{O(\sqrt r)} = e^\chi$ by the previous part.
   The other part of the decoding time, the actual size $|T^{bn'}(h(S))|$, is dominated by the size of the product.
%
%
%
%
   To prove the the `efficient decoding' part of the lemma we thus have to show that the expected size of $T(S)$ is $\le (|S|a/k)^r e^\chi$
   for any $S\subseteq[n/a]$.
   (Note: this is for a set $S\subseteq[n/a]$, part four of the proof will extend to sets $S\subseteq[n]$ and that factor $a$ in the bound will disappear.)

   At this point we will add a random permutation, $\pi:[(k/a)^2]\to[(k/a)^2]$, to the preprocessing part of the lemma.
   This bit of randomness will allow us to consider the perfect hash-family as `black box' without worrying that it might conspire in a worst case way with our fixed family $\mathcal T^{(bn')}$.
   We apply this permutation to each function of $\mathcal H$, so we are actually returning
   \begin{align*}
      T^{(n/a)}(S)=
      \bigcup
      \left\{
         \begin{multlined}
      (h^{-1}(\pi^{-1} R_1)\cap S)
      \times (h^{-1}(\pi^{-1} R_2)\cap S)
      \times \dots \times (h^{-1}(\pi^{-1} R_r)\cap S)
      \\:
      \text{for all $R\in T^{(bn')}(\pi h(S))$ and $h\in\mathcal H$}.
   \end{multlined}
   \right\}
   \end{align*}

   We can then show for any $S\subseteq[n/a]$:
\begin{align}
      E_\pi[|T^{(n/a)}(S)|]
      &=
      E_\pi\left[\sum_{h\in \mathcal H,\, R\in T^{(bn')}(\pi h(S))}
      \left|(h^{-1}(\pi^{-1} R_1)\cap S) \times \dots \times (h^{-1}(\pi^{-1} R_r)\cap S)\right|\right]
      \nonumber \\&=
      \sum_{h\in \mathcal H,\, R\in \mathcal T^{(bn')}}E_\pi\left[
            |(h^{-1}(\pi^{-1} R_1)\cap S)| \,\cdots\, |(h^{-1}(\pi^{-1} R_r)\cap S)|
      \cdot
      [R\subseteq \pi h(S)]\right]
      \nonumber \\&=
      |\mathcal T^{(bn')}|
      \sum_{h\in \mathcal H}E_{\pi}\left[
      |(h^{-1}(\pi^{-1} R_1)\cap S)| \,\cdots\, |(h^{-1}(\pi^{-1} R_r)\cap S)| \right]
      \label{eq:remove_subset}
      \\&\le
      |\mathcal T^{(bn')}|
      \sum_{h\in \mathcal H}E_{\pi}\left[|h^{-1}(\pi_1)\cap S|\right]^r
      \label{eq:use_maclaurin}
      \\&=
      |\mathcal T^{(bn')}|\, |\mathcal H|\, (|S|/(k/a)^2)^r
      \nonumber \\&=
      (|S|a/k)^r e^\chi
      .\nonumber
   \end{align}
   For~\eqref{eq:remove_subset} we used that
   \begin{align}
      [R\subseteq h(S)]
      = [\forall_{r\in R} r \in h(S)]
      = [\forall_{r\in R}h^{-1}(r)\cap S\neq\emptyset]
      \label{eq:subset_emptysets}
   \end{align}
   and so the value in the expectation was already 0 exactly when the Iversonian bracket was zero.

   For~\eqref{eq:use_maclaurin} we used the Maclaurin's Inequality~\cite{ben2014maclaurin} which says that $E(X_1X_2\dots X_r) \le (EX_1)^r$ when the $X_i$s are values sampled identically, uniformely at random without replacement from som set of non-negative values.
   In our case those values were sizes of partitions $h^{-1}(1)\cap S, \dots, h^{-1}(bn')\cap S$, which allowed us to bound the expectation as if $h$ had been a random function.

   \medskip

   We need to show that $T^{(n/a)}$ is a correct decoding function, that is
   $T^{(n/a)}(S) = \{R\in\mathcal T^{(n/a)} : R\subseteq S\}$,
   and the correctness of $\mathcal T^{(n/a)}$, that is $|S|\ge k/a $ implies $ T^{(n/a)}(S)\neq\emptyset$.

   For this, first notice that $T$ is monotone, that is
   if $S\subseteq U$ then $T(S) \subseteq T(U)$.
   This follows because $R\subseteq \pi h(S) \implies R\subseteq \pi h(U)$ and that each term $h^{-1}(R_i) \cap S$ is monotone in the size of $S$.
   This means we just need to show that $T(K)$ returns something for every $|K|=k$,
   since then $\mathcal T = T([n/a]) = T(\bigcup_K K) \supseteq \bigcup T(K)$ will return all these things.

   Hence, consider a $k$-set, $K\subseteq [n/a]$.
   By the property of $\mathcal H$, there must be some $h\in \mathcal \pi H$ such that $|h(K)|=k$,
   and by correctness of $\mathcal T^{(bn')}$ we know there is some $r$-set, $R\in T^{(bn')}(h(K))$.
   Now, for these $h$ and $R$, since $R \subseteq h(K)$ and using~\eqref{eq:subset_emptysets} we have that $(h^{-1}(R_1)\cap K)\times\dots$ is non-empty, which is what we wanted.
   Conversely, consider some $R\in \mathcal T^{(n/a)}=T^{(n/a)}([n/a])$ such that $R\subseteq K$, then
   $R\in h^{-1}(R'_1) \times h^{-1}(R'_2) \dots$ for some $R'\in\mathcal T^{(bn')}$ and $h(R)\subseteq h(K)$.
   However $h(R)$ is exactly $R'$,
   since $R_i \in h^{-1}(R'_i)\implies h(R_i) = R'_i$,
   which shows that $T^{(n/a)}(K)$ returns all the set we want.

   \paragraph{4.}

   Finally we convert our $(n/a, k/a, r)$ design, $\mathcal T^{(n/a)}$ to an $(n, k, r)$ design, call it $\mathcal T$.
   We do this by choosing a random permutation $\pi:[n]\to[n]$ and given any set $S\subseteq[n]$ we decode it as
   \begin{align*}
      T(S)
      = T^{(n/a)}(\pi S \cap \{1,\dots,n/a\})
      \cup \dots \cup
      T^{(n/a)}(\pi S \cap \{n-n/a+1,\dots,n\})
      . 
   \end{align*}

   To see that this is indded an $(n, k, r)$ design, consider any set $K\subseteq[n]$ of size $|K|=k$, there must be one partition $K\cap\{1\dots,n/a\}, \dots$ that has at least the average size $k/a$.
   Since $\mathcal T^{(n/a)}$ is a $(n/a,k/a,r)$ design, it will contain a set $R\subseteq K\cap\{in-n/a+1,\dots,in\} \subseteq K$ which we return.

   It remains to analyze the size of $T(S)$, which may of course get large if we are so unlucky that $\pi$ places much more than the expected number of elements in one of the partitions.
   In expectation this turns out to not be a problem, as we can see as follows:
   \begin{align*}
      E_\pi|T(S)|
      &= \sum_i E_\pi\left[\left|T^{(n/a)}(\pi S \cap p_i)\right|\right]
      \\&= a \sum_{s} E\left[\left|T^{(n/a)}(\pi S \cap p_1)\right| \,\big|\, |\pi S\cap p_1|=s\right] \, \Pr[|\pi S\cap p_1| = s]
      \\&=
      a \sum_{s} (sa/k)^r e^\chi \, 
      \binom{|S|}{s}\binom{n-|S|}{n/a-s}\bigg/\binom{n}{n/a}
      \\&\le
      e^\chi \sum_{s}
      \frac{\binom{s}{r}}{\binom{k/a}{r}}
      \frac{\binom{|S|}{s}\binom{n-|S|}{n/a-s}}{\binom{n}{n/a}}
      \\&=
      e^\chi
      \frac{\binom{|S|}{r}}{\binom{k/a}{r}\binom{n}{n/a}}
      \sum_{s} \binom{|S|-r}{s-r}\binom{n-|S|}{n/a-s}
      \\&=
      e^\chi
      \frac{\binom{|S|}{r}\binom{n-r}{n/a-r}}{\binom{k/a}{r}\binom{n}{n/a}}
      =
      e^\chi
      \frac{\binom{|S|}{r}\binom{n/a}{r}}{\binom{k/a}{r}\binom{n}{r}}
      \\&\le
      e^\chi
      (|S|a/k)^r e^{r^2/(k/a)} a^{-r}
      =
      (|S|/k)^r e^\chi.
   \end{align*}
   Here we used Vandermonde convolution to complete the sum over $s$, and then eventually lemma~\ref{lem:ratio} to bound the binomial ratios.
   This completes the proof of lemma~\ref{lem:turan}.
\end{proof}

\subsection{Integrality considerations}

In the proof, we needed the following quantities to be integral:
$b = r/b = \sqrt{r}$,
$a=k r^{-3/2} \log(r^{3/2})$,
$k^2/(a^2 b) = k/a = r^{3/2}/\log(r^{3/2})$,
$k/(ab) = r/\log(r^{3/2})$,
$n/a$.

It suffices to have
$\sqrt{r}$ and $r/\log(r^{3/2})$ integral,
and that the later divides $k$.

It is obviously ridiculous to require that $r$ is a square number.
Or is it?
You can actually make a number square by just changing it by a factor $1+2/\sqrt{r}$.
That would only end up giving us an $e^{O(\sqrt{r})}$, so maybe not so bad?

To make $r/\log(r^{3/2})$ integral, we can multiply with a constant.
Since it didn't matter that we divided by a $\log$, surely it doesn't matter that we multiply with a constant.

To make $r/\log(r^{3/2})$ divide $k$, we need $k$ to have some divisors.
We can't just round $k$ to say, a power of two, since that could potentially change it by a constant factor, which would come out of $(n/k)^r$.
We can change $k$ with at most $1+1/\sqrt{r}$. So $1+1/\sqrt{k}$ would be just fine.
Of course we can change it by an additive $r/\log(r^{3/2})$. That corresponds to a factor about $1+r/k$.
Since $k > r^{3/2}$ that is fine!
Or maybe we'll subtract that, because then it is still a valid $(n,k,r)$ design.
In the same way, if we round $r$ up to the nearest square root, we don't have to make the changes in the later calculations, but they can be kept intirely inside the lemma.

\section{Conclusion and Open Problems}

We have seen that, perhaps surprisingly, there exists a relatively general way of creating efficient Las Vegas versions of state-of-the art high-dimensional search data structures.

As bi-products we found efficient, explicit constructions of large Turán systems and covering codes for pairs.
We also showed an efficient way to do dimensionality reduction in hamming space without false negatives.

The work leaves a number of open questions for further research:
\begin{enumerate}[\hspace{0cm}1)]
   \item
      Can we make a completely deterministic high-dimensional data structure for the proposed problems?
      Cutting the number of random bits used for Las Vegas guarantees would likewise be interesting.
      The presented algorithms both use $O(d\log d)$ bits, but perhaps limited independence could be used to show that $O(\log d)$ suffice?
   \item
      Does there exist Las Vegas data structures with performance matching that of data-dependent LSH data structures?
      This might connect to the previous question, since a completely deterministic data structure would likely have to be data-dependent.
      However the most direct approach would be to repeat~\cite{andoni2017optimal}, but find Las Vegas constructions for the remaining uses of Monte Carlo randomness, such as clustering.
   \item
      By reductions, our results extend to $\ell_2$ and $\ell_1$ with exponent $n^{1/c}$.
      This is optimal for $\ell_1$, but for $\ell_2$ we would hope to get $n^{1/c^2}$.
      Can our techniques be applied to yield such a data structure?
      Are there other interesting LSH algorithms that can be made Las Vegas using our techniques?
      The author conjectures that a space/time trade-off version of the presented algorithm should follow easily following the approach of~\cite{andoni2017optimal,laarhoven2015tradeoffs, christiani2016framework}.
   \item
      In the most general version, we we get the overhead term $(\log n)^{-1/4}$ in our $\rho$ value.
      Some previous known LSH data structures also had large terms, such as \cite{andoni2006near}, which had a $(\log n)^{-1/3}$ term and~\cite{andoni2017optimal}, which has $(\log\log n)^{-\Theta(1)}$,
      but in general most algorithms have at most a $(\log n)^{-1/2}$ term.

      Can we improve the overhead of the approach given in this paper?
      Alternatively, is there a completely different approach, that has a smaller overhead?
\end{enumerate}

\subsection{Acknowledgements}
The author would like to thank Rasmus Pagh, Tobias Christiani and the members of the Scalable Similarity Search project for many rewarding discussions on derandomization and set similarity data structures.
Further thanks to Omri Weinstein, Rasmus Pagh, Martin Aumüller, Johan Sivertsen and the anonymous reviewers for useful comments on the manuscript;
and to the people at University of Texas, Austin, for hosting me while doing much of this work.
An extra thanks to Piotr Wygocki for pointing out the need for a deterministic reduction from $\ell_1$ to Hamming space.

The research leading to these results has received funding from the European Research Council under the European Union’s 7th Framework Programme (FP7/2007-2013) / ERC grant agreement no. 614331.

\bibliographystyle{apalike}
\balance
\bibliography{lasvegas}

\section{Appendix}

\subsection{Explicit reduction from $\ell_1$ to Hamming}
\label{ell1tohamming}

\begin{theorem}
   For $d, r, c\ge1$ and a set of points $P\subseteq \mathbb R^d$ of size $|P|=n$,
   there is a function $f :\mathbb  R^d \to \{0,1\}^{b}$ where $b=2d^2\epsilon^{-3}\log n$, such that
   for any two points $x\in\mathbb  R^d, y\in P$,
   \begin{enumerate}
      \item if $\|x-y\|_1 \le r$ then $\|f(x)-f(y)\|_1 \le (1+\epsilon)S$,
      \item if $\|x-y\|_1 \ge cr$ then $\|f(x)-f(y)\|_1 \ge (1-\epsilon)cS$,
   \end{enumerate}
   and the scale factor $S = b/(2dcr) = (d\log n)/(cr\epsilon^3)$.
\end{theorem}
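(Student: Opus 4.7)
The plan is a deterministic unary (threshold) encoding in which the one-sided guarantee in (1) comes from the $\pm 1$ per-coordinate rounding of the floor function, not from any probabilistic argument; this matches the deterministic flavor promised for an $\ell_1$-to-Hamming reduction and requires no LSF machinery beyond the embedding itself.

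First I would normalize: translate $P$ so every $y\in P$ has coordinates in a box $[0,M]^d$ with $M = 2cr^2$ (if the coordinate-wise diameter of $P$ exceeds $M$ one falls back on partitioning $P$ into $O(D/M)$ overlapping coordinate-aligned blocks, which affects space but not $b$). Set $N = b/d = 2d\epsilon^{-3}\log n$ thresholds per coordinate and step $\delta = M/N = cr^2\epsilon^3/(d\log n)$, and define
\begin{align*}
   f(x)_{i,j} = [\bar x_i \ge j\delta],\qquad i\in[d],\ j\in[N],
\end{align*}
where $\bar x_i$ is $x_i$ clamped to a slightly extended box $[-M, 2M]$ (the extension by a constant factor is absorbed into $N$). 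For $x, y$ whose clamped values lie in the extended box, the Hamming distance decomposes coordinate-wise as
\begin{align*}
   \|f(x)-f(y)\|_1 = \sum_{i=1}^d \bigl|\lfloor \bar x_i/\delta\rfloor - \lfloor \bar y_i/\delta\rfloor\bigr|
   \in \bigl[\tfrac{\|\bar x-\bar y\|_1}{\delta}-d,\ \tfrac{\|\bar x-\bar y\|_1}{\delta}+d\bigr],
\end{align*}
so substituting $\|x-y\|_1\le r$ and using $r/\delta = S = d\log n/(cr\epsilon^3)$ together with $d\le\epsilon S$ yields (1), and the symmetric substitution $\|x-y\|_1\ge cr$ yields (2). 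The additive $d$ error is absorbed by $\epsilon S$ in the main parameter regime $cr\le \log n/\epsilon^2$; outside this regime the right-hand side of (1) is already $\Omega(d)$ and one can pad with constant bits without affecting $b$.

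The main obstacle is a query point $x$ with a coordinate outside the bounding box of $P$. Naive clamping preserves (1) automatically, since it can only decrease $\|f(x)-f(y)\|_1$, but could in principle break (2) when, for example, $y_i$ sits near the box boundary and $x_i$ is far above it. The extended threshold range $[-M, 2M]$ is chosen exactly to prevent this: any coordinate with $|x_i-y_i|\ge cr$ is either still captured inside the extended range (so unary counts it correctly up to $\pm 1$) or its clamped value saturates, contributing at least $N - \lfloor y_i/\delta\rfloor \ge N/2$ flipped bits for that coordinate alone; since $N = 2d\epsilon^{-3}\log n\ge (1-\epsilon)cS$ for $r\ge 1$, even one saturated coordinate already satisfies (2). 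What remains is the routine but careful parameter bookkeeping: verifying that the rounding error, the box-extension constants, and the diameter-partitioning overhead all fit inside the claimed bound $b = 2d^2\epsilon^{-3}\log n$.
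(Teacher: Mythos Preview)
Your unary-threshold encoding is the natural first idea, but it cannot meet the stated bit budget $b = 2d^2\epsilon^{-3}\log n$. The problem is the coordinate range $M$: nothing in the hypotheses justifies $M = 2cr^2$, and the partitioning fallback you sketch does not yield a single function $f$ (nor does it produce only $O(D/M)$ pieces once you are in $d$ dimensions). The best data-independent reduction available is to a grid cell of side $\Theta(rn)$, as the paper does; but with that range a unary encoding at resolution $\delta \le \epsilon r/d$ (which you need so that the $\pm d$ rounding error is at most $\epsilon S$) already forces $N = M/\delta \ge \Theta(rn)\cdot d/(\epsilon r) = \Theta(nd/\epsilon)$ thresholds per coordinate, hence $\Theta(nd^2/\epsilon)$ bits in total --- polynomial in $n$, not polylogarithmic.

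The paper avoids this with two ingredients your outline lacks. After scaling to integers it applies the shift-and-floor map $h(c) = \big(\lfloor c/R\rfloor, \lfloor (c{+}1)/R\rfloor, \dots, \lfloor (c{+}R{-}1)/R\rfloor\big)$ with $R = \Theta(d/\epsilon)$, which satisfies $\dist(h(c_1),h(c_2)) = \min(|c_1-c_2|, R)$; this caps each coordinate's contribution at $R$ while preserving small differences exactly. The resulting symbols lie in an alphabet of size $M/R = O(n)$, and those are then compressed to $O(\epsilon^{-2}\log n)$ bits each by a balanced binary code (random and verified, or explicit via Naor--Naor) in which every pair of distinct codewords sits at Hamming distance $(1\pm\epsilon)k/2$. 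It is this code step that converts the $\Theta(n)$-sized alphabet into the $\log n$ factor in $b$; a pure threshold encoding has no such compression and therefore cannot realise the theorem as stated.
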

\begin{proof}
   First notice that we can assume all coordinates are at most $rn$.
   This can be assured by imposing a grid of side length $2rn$ over the points of $P$, such that no point is closer than distance $r$ from a cell boundary.
   Since points $x,y\in\mathbb R^d$ in different cells must now be more than distance $r$ from each other, we can set the embedded distance to $cS$ by prefixing points from different cells with a different code word.
   The grid can be easily found in time $O(dn)$ by sweeping over each dimension seperately.


   Notice that for actual data structure purposes, we can even just process each cell seperately and don't have to worry about separating them.

   \medskip

   By splitting up each coordinate into positive and negative parts, we can further assume each coordinate of each vector is positive.
   This costs a factor of 2 in $d$.

   Next, if we have an $2\epsilon r/d$ grid, then there is always a grid point within $\ell_1$-distance $\epsilon r$.
   That means if we multiply each coordinate by $d/(2\epsilon r)$ and round the coordinates to nearest integer, we get distances are changed by at most $\epsilon r$.

   We are now ready for the main trick of the reduction.
   Let $M$ be the largest coordinate, which we can assume is at most $dn/\epsilon$, and $R=dc/(2\epsilon)$ be the value of $cr$ after scaling and rounding.
   For each coordinate we map $[M] \to [\lfloor M/R\rfloor]^R$ by
   $h(c) := \langle
   \lfloor\frac{x}{R}\rfloor,
   \lfloor\frac{x+1}{R}\rfloor,
   \dots,
   \lfloor\frac{x+R-1}{R}\rfloor\rangle$.
   The point of this mapping is that for every $c_1,c_2\in[M]$, $\dist(h(c_1),h(c_2))=\min(|c_1-c_2|, R)$, where $\dist$ is hamming distance.

   \begin{figure}[h]
      \centering
      \begin{tikzpicture}
         \foreach \i in {0,...,7} {
            \draw[thick] (\i+1,-1) -- (\i+1,2);
         }
         \draw[thick] (-1,1) -- (8,1);
         \draw[thick] (-1,0) -- (8,0);
         \draw[thick, double] (0,-1) -- (0,2);

         \node at (-.5,1.5) {$100$};
         \foreach \i in {0,...,7} {
            \node at (\i+.5,1.5) {$ \pgfmathparse{ int((100+\i)/8) }\pgfmathprintnumber\pgfmathresult $};
            \draw[thick] (\i+1,1) -- (\i+1,2);
         }
         \node at (-.5,.5) {$105$};
         \foreach \i in {0,...,7} {
            \node at (\i+.5,.5) {$ \pgfmathparse{ int((105+\i)/8) }\pgfmathprintnumber\pgfmathresult $};
            \draw[thick] (\i+1,0) -- (\i+1,1);
         }
         \foreach \i in {0,...,3} {
            \node at (\i+.5,-.5) {$*$};
         }
         \node at (7.5,-.5) {$*$};
      \end{tikzpicture}
      \caption{Mapping $100$ and $105$ to $[\lfloor100/8\rfloor]^8$ preserving $\ell_1$ distance in Hamming distance.}
   \end{figure}

   All that's left is to use a code with good minimum and maximum distance to map down into $\{0,1\}$.
   A random code with bit length $k=4\epsilon^{-2}(\log 4n)$ suffices.
   To see this, let $X$ be a binomial random variable, $X\sim B(k,1/2)$.
   Then
   \begin{align*}
      \Pr[(1-\epsilon)k/2\le C \le(1+\epsilon)k/2] \le 2e^{-\epsilon^2 k/2} \le 1/(8n^2)
   \end{align*}
   so by union bound over all $\binom{M/R}{2}\le 2n^2$ pairs of values, we have constant probability that the code works.
   For a given code, we can check this property deterministically in time $kn^2$, so we can use rejection sampling and generate the code in time $\approx O(n^2)$.
   Of course, $n^2$ time may be too much.
   Luckily there are also explicit codes with the property, such as those by Naor and Naor~\cite{naor1993small}.

   The complete construction follows by concatenating the result of $h$ on all coordinates.
\end{proof}

See~\cite{indyk2007uncertainty} for an explicit reduction from $\ell_2$ to $\ell_1$.

\subsection{The Ratio of Two Binomial Coefficients}

Classical bounds for the binomial coefficient:
$(n/k)^k \le {n\choose k} \le (en/k)^k$
give us simple bounds for binomial ratios, when $n\ge m$:
$(n/em)^k \le {n\choose k}\big/{m\choose k} \le (en/m)^k$.
The factor $e$ on both sides can often be a nuisance.

Luckily tighter analysis show, that they can nearly always be either removed or reduced.
Using the fact that $\frac{n-i}{m-i}$ is increasing in $i$ for $n\ge m$, we can show
   $ {n\choose k}\big/{m\choose k} = \prod_{i=0}^{k-1}\frac{n-i}{m-i} \ge \prod_{i=0}^{k-1}\frac{n}{m} =\left(\frac nm\right)^k $.
This is often sharp enough, but on the upper bound side, we need to work harder to get results.

Let $\HE(x) = x\log1/x + (1-x)\log1/(1-x)$ be the binary entropy function,
\begin{lemma}\label{lem:ratio}
   For $n\ge m\ge k\ge 0$ we have the following bounds:
   \begin{align*}
      \left(\frac nm\right)^k
      \le \left(\frac nm\right)^k \exp\left(\frac{n-m}{nm}\frac{k(k-1)}2\right)
      \le {n\choose k}\bigg/{m\choose k}
      \le \frac{\exp\left(n\HE(k/n)\right)}{\exp\left(m\HE(k/m)\right)}
      \le \left(\frac{n}{m}\right)^ke^{k^2/m}
   \end{align*}
\end{lemma}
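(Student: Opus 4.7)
The chain contains four inequalities, of which the leftmost is the trivial observation that $\exp(\cdot)$ of a non-negative quantity is $\ge 1$, so the real work lies in the remaining three.

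For the second inequality I would start from $\log \binom{n}{k}/\binom{m}{k} = \sum_{i=0}^{k-1}\log\frac{n-i}{m-i}$, split each term as $\log(n/m) + \log\frac{1-i/n}{1-i/m}$, and bound the latter piece from below by viewing it as a definite integral:
\begin{align*}
   \log\tfrac{1-i/n}{1-i/m} = \int_m^n \tfrac{d}{dx}\log\tfrac{x-i}{x}\,dx = \int_m^n \tfrac{i}{x(x-i)}\,dx \;\ge\; \int_m^n \tfrac{i}{x^2}\,dx = \tfrac{i(n-m)}{nm}.
\end{align*}
Summing over $i=0,\dots,k-1$ yields the claimed $\frac{k(k-1)}{2}\,\frac{n-m}{nm}$ in the exponent. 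For the third inequality, a direct computation gives $\frac{d}{dx}[xH(k/x)] = \log(x/(x-k))$, so by the fundamental theorem of calculus, $nH(k/n) - mH(k/m) = \int_m^n \log(x/(x-k))\,dx$. Meanwhile $\log \binom{n}{k}/\binom{m}{k} = \sum_{i=m+1}^n \log(i/(i-k))$, and since $x\mapsto \log(x/(x-k))$ is decreasing (its derivative is $-k/(x(x-k)) < 0$), each summand is dominated by $\int_{i-1}^i \log(x/(x-k))\,dx$, and telescoping closes the bound.

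The fourth inequality is the delicate one. Expanding the entropies, the claim reduces to
\[
   (n-k)\log\tfrac{n}{n-k} - (m-k)\log\tfrac{m}{m-k} \;\le\; k^2/m.
\]
A naive Taylor expansion of $\log(1+y)$ around $y=0$ loses a factor $m/(m-k)$ and fails when $k$ is close to $m$. The fix is to use the elementary two-sided inequality $y/(1+y) \le \log(1+y) \le y$ asymmetrically: apply the upper bound with $y=k/(n-k)$ to obtain $(n-k)\log(n/(n-k)) \le k$ (so the first term is globally capped), then apply the lower bound with $y=k/(m-k)$ to get $(m-k)\log(m/(m-k)) \ge k\cdot\tfrac{k/(m-k)}{1+k/(m-k)} = k - k^2/m$. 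Subtracting gives exactly the required $\le k^2/m$.

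The main obstacle is really spotting the right decomposition in the fourth inequality. Once one abandons the hope of matching the two terms order-by-order in $k/x$ and instead bounds them against the same trivial ceiling $k$ (using the two directions of $\log(1+y)$ versus $y$ and $y/(1+y)$), the rest of the proof is essentially calculus and integer-to-integral comparisons.
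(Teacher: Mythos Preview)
Your proof is correct. The overall architecture matches the paper's --- sum-to-integral comparison for the middle inequalities, elementary $\log$ bounds for the outer ones --- but your execution differs in a couple of places worth noting.

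For the second inequality the paper writes each factor as $1+x$ with $x=\frac{i/m-i/n}{1-i/m}$ and applies $\log(1+x)\ge x/(1+x)$ directly; your integral representation $\log\frac{1-i/n}{1-i/m}=\int_m^n \frac{i}{x(x-i)}\,dx\ge\int_m^n \frac{i}{x^2}\,dx$ reaches the same $\frac{i(n-m)}{nm}$ by a slightly different route. For the third inequality the paper sums $\log\frac{n-i}{m-i}$ over $i$ and compares to $\int_0^k\log\frac{n-x}{m-x}\,dx$ via integration by parts, whereas you switch to the product form $\prod_{i=m+1}^n\frac{i}{i-k}$ and integrate in the $x$-direction using $\frac{d}{dx}[x\HE(k/x)]=\log\frac{x}{x-k}$; the two are dual views of the same rectangle. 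The real divergence is in the fourth inequality: the paper invokes the Tops{\o}e-type bounds $\HE(x)\le x\log(1/x)+x(2-x)/2$ and $\HE(x)\ge x\log(1/x)+2x(1-x)/(2-x)$, which after simplification give $\frac{k^2}{2}\bigl(\frac{1}{m-k/2}-\frac{1}{n}\bigr)\le k^2/m$. Your asymmetric use of $\log(1+y)\le y$ on the $n$-term and $\log(1+y)\ge y/(1+y)$ on the $m$-term is more elementary and lands exactly on $k^2/m$ without the intermediate entropy inequalities; it also makes transparent why the bound does not degrade as $k\to m$.
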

If $m \ge n$ we can simply flip the inequalities and swap $n$ for $m$.
Note that $(n/em)^k\le(n/m)^k$ and $e^{k^2/m}\le e^k$, so the bounds strictly tighten the simple bounds states above.

Especially the entropy bound is quite sharp, since we can also show:
$
   {n\choose k}\big/{m\choose k}
   \ge
   \frac{\exp\left((n+1)\HE(k/(n+1))\right)}{\exp\left((m+1)\HE(k/(m+1))\right)}
$,
though for very small values of $k$, the lower bound in the theorem is actually even better.
We can also get a feeling for the sharpness of the bounds, by considering the series expansion of the entropy bound at $k/m\to 0$:
$
   \frac{\exp\left(n\HE(k/n)\right)}{\exp\left(m\HE(k/m)\right)}
   =
   \left(\frac nm\right)^k \exp(\frac{n-m}{nm}\frac{k^2}2 + O(k^3/m^2))
$.

For the proofs, we'll use some inequalities on the logarithmic function from~\cite{topsok2006some}:
\begin{align}
   \log (1+x) &\ge x/(1+x) \label{eq:log-lower-1}\\
   \log (1+x) &\ge 2 x/(2 + x) \text{ for } x\ge0 \label{eq:log-lower-2}\\
   \log (1+x) &\le x (2 + x)/(2 + 2x) \text{ for } x\ge 0\label{eq:log-upper-1}
   .
\end{align}
In particular \eqref{eq:log-lower-2} and \eqref{eq:log-upper-1} imply the following bounds for the entropy function:
\begin{align}
   \HE(x) &\le x\log1/x + x(2-x)/2 \label{eq:entro-upper}\\
   \HE(x) &\ge x\log1/x + 2x(1-x)/(2-x) \label{eq:entro-lower}
   ,
\end{align}
which are quite good for small $x$.

We'll prove theorem~\ref{lem:ratio} one inequality at a time, starting from the left most:
\begin{proof}
   The first inequality follows simply from $\frac{n-m}{nm}\frac{k(k-1)}2\ge0$, which is clear from the conditions on $n\ge m\ge k$.

   The second inequality we prove by using~\eqref{eq:log-lower-1}, which implies $1+x\ge \exp(x/(1+x))$, to turn the product into a sum:
   \begin{align*}
      {n\choose k}\bigg/{m\choose k}
      &=
      \prod_{i=0}^{k-1}\frac{n-i}{m-i}
      \\&=
      \left(\frac nm\right)^k
      \prod_{i=0}^{k-1}\frac{1-i/n}{1-i/m}
      \\&=
      \left(\frac nm\right)^k
      \prod_{i=0}^{k-1}\left(1+\frac{i/m-i/n}{1-i/m}\right)
      \\&\ge
      \left(\frac nm\right)^k
      \exp\left(\sum_{i=0}^{k-1}\frac{i(n-m)}{(n-i)m}\right)
      \\&\ge
      \left(\frac nm\right)^k
      \exp\left(\sum_{i=0}^{k-1}i\frac{n-m}{nm}\right)
      \\&=
      \left(\frac nm\right)^k
         \exp\left(\frac{k(k-1)}2\frac{n-m}{nm}\right).
   \end{align*}

   For the entropy upper bound we will use an integration bound, integrating $\log(n-i)/(m-i)$ by parts:
   \begin{align*}
      {n\choose k}\bigg/{m\choose k}
      &=
      \prod_{i=0}^{k-1}\frac{n-i}{m-i}
      \\&=
      \exp\left(
         \sum_{i=0}^{k-1}\log\frac{n-i}{m-i}
      \right)
      \\&\le
      \exp\left(
         \int_{0}^{k}1\log\frac{n-x}{m-x} dx
      \right)
      \\&=
      \exp\left(x \log\frac{n-x}{m-x}\bigg|_0^k
      - \int_0^k x \left(\frac{1}{m-x}-\frac{1}{n-x} \right)dx \right)
      \\&=
      \exp\left(k \log\frac{n-k}{m-k}
      + \int_0^k \left(\frac{m}{m-x}-\frac{n}{n-x}\right) dx \right)
      \\&=
      \exp\left(k \log\frac{n-k}{m-k}
      - \bigg| m\log\frac{1}{m-x}-n\log\frac1{n-x}\bigg|_0^k \right)
      \\&=
      \exp\left(n\HE(k/n)-m\HE(k/m)\right).
   \end{align*}
   The integral bound holds because $\log\frac{n-i}{m-i}$ is increasing in $i$, and so $\log\frac{n-i}{m-i}\le\int_i^{i+1}\log\frac{n-x}{m-x}dx$.
   We see that $\frac{n-i}{m-i}$ is increasing by observing $\frac{n-i}{m-i}=\frac{n}{m}+\frac{in/m-i}{m-i}$ where the numerator and denominator of the last fraction are both positive.
   The entropy lower bound, mentioned in the discussion after the theorem, follows similarly from integration, using $\log\frac{n-i}{m-i}\ge\int_{i-1}^{i}\log\frac{n-x}{m-x}dx$.

   For the final upper bound, we use the bounds~\eqref{eq:entro-upper} and~\eqref{eq:entro-lower} on $\HE(k/n)$ and $\HE(k/m)$ respectively:
   \begin{align*}
      \frac{\exp\left(n\HE(k/n)\right)}{\exp\left(m\HE(k/m)\right)}
      \le \left(\frac nm\right)^k \exp\left(\frac{k^2}{2}\left(\frac1{m-k/2}-\frac1n\right)\right)
      \le \left(\frac nm\right)^k \exp\left(\frac{k^2}{m}\right)
      .
   \end{align*}
\end{proof}

\subsection{Proof of lemma~\ref{lem:int}}
Recall the lemma:
\intt*
\begin{proof}
\begin{figure}
   \centering
   \begin{tikzpicture}
      \draw [thick,decorate,decoration={brace,amplitude=10}]
         (0,3.05) -- (5,3.05) node[midway,anchor=south,yshift=10] {$d-r$};
      \draw [thick,decorate,decoration={brace,amplitude=10}]
         (5,3.05) -- (9,3.05) node[midway,anchor=south,yshift=10] {$r$};

      \draw[thick] (0,0) -- (9,0) -- (9,3) -- (0,3) -- (0,0);
      \draw[thick] (0,1) -- (9,1);
      \draw[thick] (0,2) -- (9,2);
      \draw[thick] (5,0) -- (5,3);
      \node at (2.5,2.5) {0};
      \node at (2.5,1.5) {0};
      \node at (2.5,.5) {$j$};
      \node at (7,2.5) {0};
      \node at (7,1.5) {1};
      \node at (7,.5) {$i$};
      \node[anchor=east] at (0,2.5) {$x$};
      \node[anchor=east] at (0,1.5) {$y$};
      \node[anchor=east] at (0,.5) {$z$};
   \end{tikzpicture}
   \caption{To calculate how many points are within distance $t$ from two points $x$ and $y$, we consider without loss of generality $x=0\dots0$.
   For a point, $z$, lying in the desired region, we let $i$ specify the number of $1$'s where $x$ and $y$ differ, and $j$ the number of $1$'s where they are equal.
   With this notation we get $d(x,z)=i+j$ and $d(y,z)=j+r-i$.
   \label{fig:iandj}
   }
\end{figure}

\begin{figure}
   \centering
   \begin{tikzpicture}
      \node[anchor=south west, inner sep=0pt] (background) at (0,0) {\includegraphics[width=.8\textwidth]{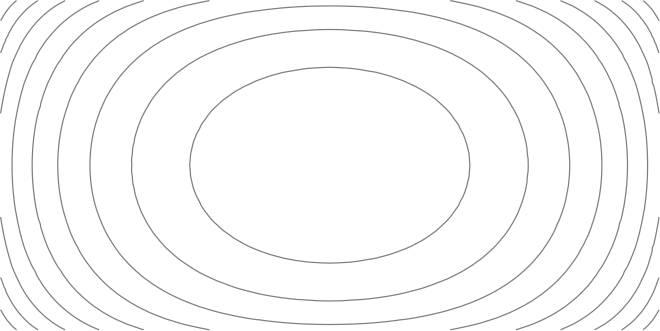}};
      \begin{scope}[x={(background.south east)},y={(background.north west)}]
         \draw[thick, black] (0,0) -- (.15,0) -- (.4,.5) -- (.15,1) -- (0,1);
         \node[anchor=east] at (.3,.3) {$j-i=t-r$};
         \node[anchor=east] at (.3,.7) {$j+i=t$};
         \node[anchor=east] at (-.1,.5) {$i$};
         \draw[thick, black] (0,0) -- (0,1);
         \draw[black, dashed] (0,.5) -- (1,.5);
         \node[anchor=east] at (0,1) {$r$};
         \node[anchor=east] at (0,.5) {$\frac{r}2$};
         \node[anchor=east] at (0,0) {$0$};
         \node[anchor=north] at (.5,-.1) {$j$};
         \draw[thick, black] (0,0) -- (1,0);
         \draw[black, dashed] (.5,0) -- (.5,1);
         \draw[black, dashed] (.4,.5) -- (.4,0);
         \node[anchor=north] at (0,0) {$0$};
         \node[anchor=north] at (.15,0) {$t-r$};
         \node[anchor=north] at (.4,0) {$t-\frac{r}2$};
         \node[anchor=north] at (.5,0) {$\frac{d-r}2$};
         \node[anchor=north] at (1,0) {$d-r$};
      \end{scope}
   \end{tikzpicture}
   \caption{
      A contour plot over the two dimensional binomial.
      The pentagon on the left marks the region over which we want to sum.
      For the upper bound we sum $i$ from 0 to $r$ and $j$ from 0 to $t-r/2$.
   }
   \label{fig:regions}
\end{figure}

      From figure~\ref{fig:iandj} we have that
      $I = \sum_{\substack{i+j\le t\\ j-i\le t-r}}{r\choose i}{d-r\choose j}$,
      and from monotonicity (and figure~\ref{fig:regions}) it is clear that
      $
         \binom{r}{r/2} \binom{d-r}{t-r/2}
         \le I \le
         \sum_{\substack{0\le i\le r\\0\le j\le t-r/2}} {r\choose i} \binom{d-r}{j}
      $.

      We expand the binomials using Stirling's approximation:
      $
         \frac{\exp(n H(k/n))}{\sqrt{8(1-k/n)k}}\le
         \binom{n}{k} \le \sum_{i\le k}\binom{n}{i}\le \exp(n H(k/n))
      $
      where $H(x) = x\log\frac1x + (1-x)\log\frac1{1-x}$ is the binary entropy function, which we bound as
      $
         \log2 - 2 (\tfrac12-x)^2 - 4(\tfrac12-x)^4
         \le H(x) \le \log2 - 2 (\tfrac12 - x)^2
      $.
      We then have for the upper bound:
      \begin{align*}
         I 2^{-d}
         \le 2^{r-d} \exp\left[(d-r)H\left(\tfrac{t-r/2}{d-r}\right)\right]
         \le \exp\left[-\tfrac{s^2}{2(1-r/d)}\right]
      \end{align*}
      And for the lower bound:
      \begin{align*}
         I 2^{-d}
         \ge 2^{-d} \binom{r}{r/2} \binom{d-r}{t-r/2}
         &\ge 
            \frac{2^{r-d}}{\sqrt{2r}}
            \frac{\exp\left[(d-r)H\left(\tfrac{t-r/2}{d-r}-\log 2\right)\right]}{\sqrt{8(1-\tfrac{t-r/2}{d-r})(t-r/2)}}\\
         &\ge
            \exp[-\tfrac{s^2}{2(1-r/d)}]
            \frac{\exp[-\tfrac{s^4}{4(1-r/d)^3d}]}
            {\sqrt{4r(d-r)(1-\tfrac{d s^2}{(d-r)^2})}}\\
         &\ge
            \exp[-\tfrac{s^2}{2(1-r/d)}]
            \frac1d
            \frac{1-2s^4/d}{\sqrt{1-4s^2/d}}
            ,
      \end{align*}
      where for the last inequality we used the bound $e^x \ge 1+x$.
      The last factor is monotone in $s$ and we see that for $s\le d^{1/4}/2$ it is $\ge\frac78\left(1-1/\sqrt{d}\right)^{-1/2}\ge\frac78$, which gives the theorem.
   \end{proof}
   The factor of $1/d$ can be sharpened a bit, e.g. by using the two dimensional Berry-Essen theorem from~\cite{bentkus2005lyapunov}.
\end{document}